\newtheorem{theorem}{Theorem}
\newtheorem{lemma}[theorem]{Lemma}
\newtheorem{corollary}[theorem]{Corollary}
\newtheorem{proposition}[theorem]{Proposition}
\theoremstyle{definition}
\newtheorem{definition}[theorem]{Definition}
\newtheorem{remark}[theorem]{Remark}
\renewcommand{\epsilon}{\varepsilon}
\newcommand{\classPLS}{{\sf PLS}}
   \def\1{{\mathbbm 1}}
   \def\R{{\mathbb R}}
\DeclareMathOperator{\rk}{\text{rk}}
   \def\cB{{\mathcal B}}
   \def\cM{{\mathcal M}}
   \def\cP{{\mathcal P}}
   \def\cS{{\mathcal S}}
\let\original@algocf@latexcaption\algocf@latexcaption
\long\def\algocf@latexcaption#1[#2]{%
  \@ifundefined{NR@gettitle}{%
    \def\@currentlabelname{#2}%
  }{%
    \NR@gettitle{#2}%
  }%
  \original@algocf@latexcaption{#1}[{#2}]%
}
\newcommand{\mylabel}[2]{#2\def\@currentlabel{#2}\label{#1}}
\newcommand{\Kante}[3]{\ncline{-}{#1}{#2}\ncput{\colorbox{almostwhite}{$#3$}}}
\newcommand{\Kantedashed}[3]{\ncline[linestyle=dashed]{-}{#1}{#2}\ncput{\colorbox{almostwhite}{$#3$}}}
\newcommand{\Knoten}[3]{\cnode*(#1,#2){2.2pt}{#3}}
\title{Efficient Black-Box Reductions for Separable Cost Sharing}
\author{%
Tobias Harks\thanks{Universit\"at Augsburg, Institut f\"ur Mathematik, Germany. \texttt{tobias.harks@math.uni-augsburg.de}}%
\and%
Martin Hoefer\thanks{Goethe University Frankfurt, Institute of Computer Science, Germany. \texttt{mhoefer@cs.uni-frankfurt.de}}%
\and%
Anja Huber\thanks{Universit\"at Augsburg, Institut f\"ur Mathematik, Germany. \texttt{anja.huber@math.uni-augsburg.de}}%
\and%
Manuel Surek\thanks{Universit\"at Augsburg, Institut f\"ur Mathematik, Germany. \texttt{manuel.surek@math.uni-augsburg.de}}}
\date{}
\begin{document}

\maketitle

\begin{abstract}
In \emph{cost sharing games with delays,} a set of agents jointly allocates a finite subset of resources. Each resource has a fixed cost that has to be shared by the players, and each agent has a non-shareable player-specific delay for each resource. A prominent example is uncapacitated facility location (UFL), where facilities need to be opened (at a shareable cost) and clients want to connect to opened facilities. Each client pays a cost share and his non-shareable physical connection cost. Given any profile of subsets allocated by the agents, a \emph{separable cost sharing protocol} determines cost shares that satisfy budget balance on every resource and separability over the resources. Moreover, a separable protocol guarantees existence of pure Nash equilibria in the induced strategic game for the agents.

In this paper, we study separable cost sharing protocols in several general combinatorial domains. We provide black-box reductions to reduce the design of a separable cost-sharing protocol to the design of an approximation algorithm for the underlying cost minimization problem. In this way, we obtain new separable cost-sharing protocols in games based on arbitrary player-specific matroids, single-source connection games without delays, and connection games on $n$-series-parallel graphs with delays. All these reductions are efficiently computable -- given an initial allocation profile, we obtain a cheaper profile and separable cost shares turning the profile into a pure Nash equilibrium. Hence, in these domains any approximation algorithm can be used to obtain a separable cost sharing protocol with a price of stability bounded by the approximation factor.

\end{abstract}

\clearpage

\section{Introduction}
\label{sec:intro}

Cost sharing is a fundamental task in networks with strategic agents and has attracted a large amount of interest in algorithmic game theory. Traditionally, cost sharing has been studied in a cooperative sense, i.e., in the form of cooperative games or mechanism design. Many of these approaches treat cost in a \emph{non-separable} way and return a single, global cost share for each agent. In contrast, when agents jointly design a resource infrastructure in large networks, it is much more desirable to provide algorithms and protocols for \emph{separable} cost sharing that specify which agent needs to pay how much to each resource. Here the natural approach are strategic cost sharing games with $n$ players that allocate subsets of $m$ resources. Each resource generates a cost depending on the subset of players allocating it. A protocol determines a cost share for each resource and each player using it. In addition to separability, there are further natural desiderata for such protocols, such as budget-balance (distribute exactly the arising cost of each resource) and existence of a pure Nash equilibrium (PNE), i.e., allow the resulting game to stabilize.

Perhaps the most prominent such protocol is the fair-share protocol, in which the cost of each resource is allocated in equal shares to the players using it. This approach has been studied intensively (see our discussion below), but there are several significant drawbacks. It can be \classPLS-hard to find~\cite{Syrgkanis10} a PNE, even in connection games on undirected networks. The price of stability (PoS), i.e., the total cost of the best Nash equilibrium compared to the cost of the optimal allocation, can be as large as $\Omega(\log n)$~\cite{AnshelevichDKRTW08, ChenRV10}, even though much better solutions can often be found in polynomial time.

In this paper, we study a slight generalization of cost sharing games, where every resource has a shareable cost component and a non-shareable player-specific delay component. The shareable cost needs to be shared by the players using it, the non-shareable player-specific delay represents, e.g., a physical delay and is thus unavoidable. This setting arises in several relevant scenarios, such as uncapacitated facility location (UFL)~\cite{HarksF14}. Here players share the monetary cost of opened facilities but additionally experience delays measured by the distance to the closest open facility. Another important example appears in network design, where players jointly buy edges of a graph to connect their terminals. Besides the monetary cost for buying edges, each player experiences player-specific delays on the chosen paths. In such a distributed network environment, it is not clear a priori if an optimal solution can be stable -- i.e., if the shareable costs can be distributed among the players in a separable way so that players do not want to deviate from it. This question leads directly to the design of protocols that distribute the costs in order to induce stable and good-quality solutions of the resulting strategic game. 

Our results are three polynomial-time black-box reductions for the price of stability of separable cost sharing protocols in combinatorial resource allocation problems. Our domains represent broad generalizations of UFL -- arbitrary, player-specific matroids, single-source connection games without delays, and connection games on undirected $n$-series-parallel graphs with delays. In each of these domains, we take as input an arbitrary profile and efficiently turn it into a cheaper profile and a sharing of the shareable costs such that it is a Nash equilibrium. Our protocols are polynomial-time in several ways. Firstly, the games we study are succinctly represented. In matroids, we assume that strategies are represented implicitly via an independence oracle. For connection games on graphs, the strategies of each player are a set of paths, which is implicitly specified by terminal vertices of the player and the graph structure. The cost sharing protocol is represented by a strategy profile $S$ and a sharing of the shareable costs arising in $S$ on each resource. While in principle the protocol must specify a sharing of the costs for all of the other (possibly exponentially many) strategy profiles, one can do so implicitly by a simple lexicographic assignment rule. It guarantees that the profile $S$ becomes a PNE. As such, starting from an arbitrary initial profile $S'$, we can give in polynomial time the Nash equilibrium profile $S$, the cost shares for $S$, and the assignment rule for cost shares in the other profiles. Hence, if $S'$ is polynomial-time computable, then both protocol and Nash equilibrium $S$ are both polynomial-time computable and polynomial-space representable.

\subsection{Our Results}

We present several new polynomial-time black-box reductions for separable cost sharing protocols with small price of stability (PoS). We study three domains that represent broad generalizations of the uncapacitated facility location problem. In each domain, we devise an efficient black-box reduction that takes as input an arbitrary strategy profile and computes a new profile of lower cost together with a separable cost sharing protocol inducing the cheaper profile as a PNE. Thus, \emph{any} polynomial-time $\alpha$-approximation of the social cost can be turned into a separable cost sharing protocol with PoS at most $\alpha$.

\paragraph{Matroidal Set Systems.} 
In Section~\ref{sec:matroids} we provide a black-box reduction for matroidal set systems. Our results even apply to the broad class of \emph{subadditive} cost functions that include fixed costs and discrete concave costs even with weighted players as a special case. Here we assume access to a value oracle for the subadditive cost function for each resource. Matroidal set systems with player-specific delays include uncapacitated facility location as a special case, since these correspond to matroid games, where each player has a uniform rank $1$ matroid. For \emph{metric} UFL, there is for instance a $1.488$-approximation algorithm~\cite{Li:2013} using ideas of a previous $1.5$-approximation algorithm~\cite{Byrka10}. This leads to a separable cost sharing protocol with PoS of $1.488$. Also, the existing hardness results for UFL carry over to the design of separable cost sharing protocols, and for metric UFL there is a lower bound of $1.46$~\cite{GuhaK99}.

\paragraph{Connection Games with Fixed Costs.}
In Section~\ref{sec:fixed} we consider cost sharing games on graphs, where the set systems correspond to paths connecting a player-specific source with a player-specific terminal. The underlying optimization problem is Steiner forest. For multi-terminal connection games without delays, we observe that a simple greedy algorithm for the underlying Steiner forest problem combined with the idea of Prim-Sharing~\cite{ChenRV10} yields a separable protocol in polynomial time. Since the greedy algorithm has recently been shown to provide a constant-factor approximation~\cite{GuptaK15}, the protocol yields a constant PoS.

For single-source multi-terminal connection games we again provide a polynomial-time black-box reduction. Our result improves significantly over the existing Prim-Sharing~\cite{ChenRV10} with a price of stability of 2. We obtain separable protocols based on any approximation algorithm for Steiner tree, such as, e.g., the classic 1.55-approximation algorithm~\cite{RobinsZ05}, or the celebrated recent 1.39-approximation algorithm~\cite{ByrkaGRS13}. Our black-box reduction continues to hold even for directed graphs, where we can use any algorithm for the Directed Steiner Tree problem~\cite{CharikarCCDGG99}, or games based on the (directed or undirected) Group Steiner Tree problem~\cite{GargKR00,ChekuriP05}. Similarly, all lower bounds on approximation hardness translate to the price of stability of polynomial-time computable separable protocols. 

\paragraph{Connection Games with Delays.} 
Finally, in Section~\ref{sec:nSepa} we study multi-terminal connection games with delays and fixed costs. For directed graphs, an optimal Steiner forest is not enforceable by a separable cost sharing protocol, even for two players~\cite{ChenRV10}. Very recently, a similar result was shown even for two-player games on undirected graphs~\cite{harks2017}. Thus, for general graphs, we cannot expect separable protocols with optimal or close-to-optimal equilibria, or (efficient) black-box reductions. We introduce a class of so-called $n$-series-parallel graphs, which allows to obtain a black-box reduction in polynomial time. The transformation directly implies that the $n$-series-parallel graphs always admit a separable cost sharing protocol inducing an optimal Steiner forest as an equilibrium. 

The reduction also applies to discrete-concave cost functions and player-specific delays, however, we do not know if polynomial running time is guaranteed. $n$-series-parallel graphs have treewidth at most 2, thus, for fixed edge costs and no delays, it is possible to compute efficiently even an optimal Steiner forest~\cite{bateni2011}. Hence, in this case we obtain a separable protocol with PoS of 1 in polynomial time. We finally demonstrate that the specific setting of $n$-series-parallel graphs is in some sense necessary: Even for generalized series-parallel graphs we give a counterexample showing that a black-box reduction is impossible to achieve.

\subsection{Preliminaries and Related Work}
Cooperative cost sharing games have been studied over the last decades for a variety of combinatorial optimization problems, such as minimum spanning tree~\cite{Bird76}, Steiner tree~\cite{Megiddo78, GranotH81, GranotM98, Tamir91}, facility location~\cite{GoemansS04}, vertex cover~\cite{DengIN99}, and many more. Cooperative cost sharing games have interesting implications for (group-)strategyproof cost sharing mechanisms~\cite{MoulinShenker, JainV01, KonemannLSZ08, PalT03}. For Bayesian cost-sharing mechanisms there even exist efficient black-box reductions from algorithm to mechanism design~\cite{GeorgiouS12}. A major difference to our work is that cooperative cost sharing is \emph{not separable}.

The most prominent example of a \emph{separable cost sharing protocol} is the \emph{fair-share} protocol, in which the cost of each resource is divided in equal shares among the players that allocate it. This protocol is also anonymous, and it implies that the resulting game is a congestion game~\cite{Rosenthal73}. It guarantees the smallest price of stability within a class of anonymous protocols~\cite{ChenRV10}. The fair-share protocol has attracted a serious amount of research interest over the last decade~\cite{AnshelevichDKRTW08, AndelmanFM09, Bilo2010, HansenT09}, especially the notorious open problem of a constant price of stability for connection games in undirected graphs~\cite{FiatKLOS06, Li09, LeeL13, BiloFM13, DisserFKM15}. However, as a significant drawback, outside of the domain of undirected connection games the price of stability is often as large as $\Omega(\log n)$. Moreover, computing a PNE is \classPLS-hard, even for undirected connection games~\cite{Syrgkanis10}.

More general separable protocols have been studied mostly in terms of the price of anarchy, e.g., for scheduling (or matroid games)~\cite{AvniT16,CGV17,ChristodoulouGS17,HarksF13,Feldman12} or single-source network design with~\cite{CS16,LS16} and without uncertainty~\cite{ChenRV10}. The best result here is a price of anarchy (and stability) of 2 via Prim-Sharing~\cite{ChenRV10}, a protocol inspired by Prim's MST algorithm. A protocol with logarithmic price of stability was shown for capacitated UFL games~\cite{HarksF14}.

We note here that separable protocols with low PoS can be obtained using results for cost sharing games with so-called \textit{arbitrary sharing}. In cost sharing games with arbitrary sharing, each agent $i \in N$ specifies as strategy a set $S_i$ of allocated resources and a payment $p_{i,e}$ for every resource $e \in E$. A resource $e \in E$ is \emph{bought} if the total payments exceed the costs $\sum_i p_{i,e} \ge c_e(S)$. The private cost of player $i$ is the sum of all payments $\sum_e p_{i,e}$ if all resources $e \in S_i$ are bought, and $\infty$ otherwise. Note that for games with fixed costs $c_e(S) = c_e$, one usually drops the explicit allocation $S_i$ from the strategy of a player. Instead, each player $i \in N$ simply specifies strategic payments $p_{i,e}$ for each $e \in E$. Then the private cost of player $i$ is $\sum_e p_{i,e}$ if payments suffice to buy at least one feasible set in $\cS_i$, and $\infty$ otherwise. The following proposition is an interesting, straightforward insight. It has been observed before in the special case of single-source connection games~\cite[Proposition 6.5]{ChenRV10}.

\begin{proposition}
   If for a cost sharing model, the non-cooperative game with arbitrary sharing has a pure Nash equilibrium, then there is a separable cost sharing protocol with the same pure Nash equilibrium.
\end{proposition}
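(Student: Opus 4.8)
The plan is to start from an arbitrary pure Nash equilibrium $(S^*,p^*)$ of the game with arbitrary sharing, where $S^*=(S^*_i)_{i\in N}$ and $p^*=(p^*_{i,e})$, and to build a separable, budget-balanced protocol for which the allocation profile $S^*$ is a PNE with cost shares equal to $p^*$. First I would record two structural facts about $(S^*,p^*)$. (a) No player pays a positive amount on a resource it does not use: if $e\notin S^*_i$ and $p^*_{i,e}>0$, then resetting $p_{i,e}:=0$ keeps every resource in $S^*_i$ bought and strictly lowers $i$'s cost, contradicting equilibrium. (b) Every resource is exactly budget-balanced, $\sum_{j\in N}p^*_{j,e}=c_e(S^*)$: a shortfall would leave $e$ unbought, giving every player using $e$ infinite cost (which she could strictly improve by buying some feasible set at finite total cost), while an excess would let some positive payer — who by (a) uses $e$ — lower her payment while keeping $e$ bought. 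From (a) and (b): if $i$ uses $e$ in $S^*$ then $\sum_{j\neq i}p^*_{j,e}=c_e(S^*)-p^*_{i,e}$, and if $i$ does not use $e$ in $S^*$ then $\sum_{j\neq i}p^*_{j,e}=c_e(S^*)$, the latter cost being taken with respect to the user set $U^*_e$ of $e$ in $S^*$.

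Next I would define the protocol. Fix an arbitrary order on $N$. For a resource $e$, let $U^*_e$ be its set of users in $S^*$ and, for the profile $S$ whose shares are being computed, let $U$ be its set of users. If $U=U^*_e$, assign player $i\in U$ the share $p^*_{i,e}$; by (b) this is budget-balanced. Otherwise, charge the entire cost $c_e(S)$ to the lowest-ordered player of $U\setminus U^*_e$ if this set is nonempty, and to the lowest-ordered player of $U$ otherwise, giving everyone else share $0$. This protocol is defined on all profiles, it is separable (shares on $e$ depend only on the user set of $e$) and budget-balanced. To see that $S^*$ is a PNE, fix $i$ with the others playing $S^*_{-i}$; at $S^*_i$ her cost is $\sum_{e\in S^*_i}p^*_{i,e}$. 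If she deviates to a feasible $S_i$, then on $e\in S_i\cap S^*_i$ the user set is unchanged and she pays $p^*_{i,e}$, while on $e\in S_i\setminus S^*_i$ the user set becomes $U^*_e\cup\{i\}$ with $i$ its unique new member, so she is charged the full cost $c_e(S^*_{-i},S_i)$; her deviation cost is thus $\sum_{e\in S_i\cap S^*_i}p^*_{i,e}+\sum_{e\in S_i\setminus S^*_i}c_e(S^*_{-i},S_i)$. In the arbitrary-sharing game, however, $i$ could deviate to $S_i$ with the cheapest payments keeping all of $S_i$ bought against $p^*_{-i}$, namely $q_{i,e}=\max\{0,\,c_e(S^*_{-i},S_i)-\sum_{j\neq i}p^*_{j,e}\}$; by the identities above this costs $\sum_{e\in S_i\cap S^*_i}p^*_{i,e}+\sum_{e\in S_i\setminus S^*_i}\max\{0,\,c_e(S^*_{-i},S_i)-c_e(S^*)\}$. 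Since costs are nonnegative, each summand in the last sum is at most $c_e(S^*_{-i},S_i)$, so this quantity is at most $i$'s deviation cost under the new protocol; and since $(S^*,p^*)$ is an equilibrium it is at least $\sum_{e\in S^*_i}p^*_{i,e}$, which is exactly her cost at $S^*$ under the new protocol. Hence no deviation from $S^*_i$ helps, and $S^*$ is a PNE.

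The main point requiring care is that the cost functions need not be monotone, so the marginal cost $c_e(S^*_{-i},S_i)-c_e(S^*)$ a deviator faces in the arbitrary-sharing game can be strictly smaller than the full cost $c_e(S^*_{-i},S_i)$ charged to her by the separable protocol; the observation that rescues the argument is that this discrepancy only makes deviations weakly more expensive, so the equilibrium property is preserved. (We use here that payments in the arbitrary-sharing game are nonnegative, which makes the truncation $\max\{0,\cdot\}$ the correct cheapest-deviation formula; with negative payments allowed this subtlety disappears.) Everything else is routine bookkeeping with facts (a) and (b), together with the remark that the cost shares on all off-equilibrium profiles only need to be well-defined and budget-balanced, for which the lexicographic rule above suffices.
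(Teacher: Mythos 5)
Your proposal is correct and follows essentially the same construction as the paper: keep the equilibrium payments $p^*_{i,e}$ as cost shares whenever the user set of $e$ is unchanged, charge the full cost to a (lexicographically chosen) new user otherwise, and observe that a deviator thus pays at least the marginal cost she would face in the arbitrary-sharing game, so the equilibrium property transfers. The only difference is that you spell out the budget-balance and no-overpayment facts and the deviation-cost comparison that the paper dismisses as "easy to see" and "straightforward to verify"; these details are all correct.
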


\begin{proof}
It is easy to see that in a PNE $(S,p)$ for a game with arbitrary sharing, every player $i \in N$ contributes only to resources from one feasible set $S_i \in \cS_i$. Moreover, the cost of every resource is exactly paid for. Finally, if a player $i$ deviates to a different feasible set $S_i'$, then for each $e \in S_i' \setminus S_i$ she only needs to contribute the marginal costs that arise due to her presence. In particular, for fixed costs, she can use all resources bought by others for free. 

Hence, given a PNE $(S,p)$ for the game with arbitrary sharing, we obtain a basic and separable protocol $\Xi$ as follows. If in profile $S'$ a resource $e$ is allocated by the set $N_e(S)$, then we assign $\xi_{e,i}(S') = p_{i,e}$ for every $i \in N_e(S)$. For a profile $S'$, in which at least one other player $i \in N_e(S') \setminus N_e(S)$ allocates $e$, we pick one of these players $i$, and she has to pay the full cost $\xi_{i,e}(S') = c_e(S')$. If players from a strict subset $N_e(S') \subset N_e(S)$ allocate $e$ in $S'$, we can use an arbitrary budget-balanced sharing of $c_e(S')$. It is straightforward to verify that such a protocol is basic and separable, and the state $S$ is a PNE.
\end{proof}

This implies existence of separable protocols with optimal PNE and price of stability 1 for a variety of classes of games, including matroid games with uniform discrete-concave costs~\cite{HarksP14}, uncapacitated facility location with fixed~\cite{CardinalH10} and discrete-concave costs~\cite{Hoefer11}, connection games (single-source~\cite{Hoefer09, AnshelevichDTW08} and other classes~\cite{AnshelevichC09a, AnshelevichC09b, Hoefer13}) with fixed costs, and more. However, the large majority of these results are \emph{inefficient}, i.e., there is no polynomial-time algorithm that computes the required optimal equilibrium. 

Alternatively, one may resort to approximate equilibria in games with arbitrary sharing that are efficiently computable. The most prominent technique works via reducing costs by an additive value $\epsilon$ to ensure polynomial running time (put forward for single-source connection games in~\cite{AnshelevichDTW08} and used in much of the follow-up work~\cite{Hoefer09,AnshelevichC09a,AnshelevichC09b,CardinalH10}). This approach \emph{does not translate} to separable protocols, since a player must eventually contribute to \emph{all resources}. This is impossible for the model we consider here.

\section{Separable Cost Sharing Protocols}

We are given a finite set $N$ of players and a finite set $E$ of resources. Each player $i\in N$ is associated with a predefined family of subsets $\cS_i\subseteq 2^E$ from which player $i$ needs to pick at least one. The space of strategy profiles is denoted by  $\mathcal{S}:= \times_{i \in N} \mathcal{S}_i$. For $S\in\mathcal{S}$ we denote by $N_e(S)=\{i\in N: e \in S_i\}$ the set of players that allocate resource $e$. Every resource $e \in E$ has a fixed cost $c_e\geq 0, e\in E$ that is assumed to be \emph{shareable} by the players. In addition to the shareable costs, there are \emph{player-specific constant costs} $d_{i,e}\geq 0, i\in N, e\in E$ that are not shareable. If player $i$ chooses subset $S_i$, then the player-specific costs $\sum_{e\in S_i}d_{i,e}$ must be 
paid completely by player $i$. The total cost of a profile $S$ is defined as $C(S)=\sum_{e\in \cup_{i\in N} S_i} c_e+\sum_{i\in N}\sum_{e\in S_i} d_{i,e}.$

A \emph{cost sharing protocol} $\Xi$ assigns cost share functions $\xi_{i,e} : \cS \rightarrow \R_{\geq 0}$ for all $i\in N$ and $e\in E$ and thus induces the strategic game $(N,\cS,\xi)$. For a player $i$, her total private cost of strategy $S_i$ in profile $S$ is $\xi_i(S):=\sum_{e\in S_i}{(\xi_{i,e}(S)+d_{i,e})}$. We assume that every player picks a strategy in order to minimize her private cost. A prominent solution concept in non-cooperative game theory are pure Nash equilibria. Using standard notation in game theory, for a strategy profile $S \in \cS$ we denote by $(S'_i,S_{-i}) := (S_1,\dots,S_{i-1},S'_i,S_{i+1},\dots,S_n) \in \cS$ the profile that arises if only player $i$ deviates to strategy~$S'_i\in \cS_i$. A profile is a \emph{pure Nash equilibrium (PNE)} if for all $i \in N$ it holds $\xi_i(S) \leq \xi_i(S'_i,S_{-i})$ for all $S'_i\in \cS_i$.

In order to be practically relevant, cost sharing protocols need to satisfy several desiderata. In this regard, \emph{separable} cost sharing protocols
are defined as follows~\cite{ChenRV10}.

\begin{definition}[Cost Sharing Protocols and Enforceability]
  A cost sharing protocol $\Xi$ is 
  \begin{enumerate}
    \item {\em stable} if it induces only games that admit at least one pure Nash equilibrium.
    \item \emph{budget balanced,} if for all $e\in E$ with $N_e(S)\neq \emptyset$
    \begin{align*}	
      c_e &= \sum_{i\in N_e(S)}{\xi_{i,e}(S)} 
	  \text{ and } \xi_{i,e}(S) =0 \text{ for all $i\not\in N_e(S)$.}
	\end{align*}
	\item {\em separable} if it is stable, budget-balanced and induces only games for which in any two profiles $S,S' \in \cS$ for every resource $ e \in E$,
	\[N_e(S)=N_e(S')\Rightarrow \xi_{i,e}(S) = \xi_{i,e}(S') \text{ for all } i \in N_e(S). \]
	\item {\em polynomial time computable,} if the cost sharing functions $\xi$ can be computed in polynomial time in the encoding length of the cost sharing game.
  \end{enumerate}
  We say that a strategy profile $S$ is \emph{enforceable}, if there is
  a separable protocol inducing $S$ as a pure Nash equilibrium.
\end{definition}
Separability means that for any two profiles $S,S'$ the cost shares on $e$ are the same if the set of players using $e$ remains unchanged. Still, separable protocols can assign cost share functions that are specifically tailored to a given congestion model, for example based on an optimal profile. In this paper, we are additionally interested in \emph{polynomial-time computable protocols} that we introduce here.

\section{Matroid Games}
\label{sec:matroids}

In this section, we consider matroid games. As usual in matroid theory, we will write $\cB_i$ instead of $\cS_i$, and $\cB$ instead of $\cS$,
when considering matroid games. The tupel $\mathcal{M} = (N, E, \cB, (c_e)_{e \in E}, (d_{i,e})_{e\in E,i\in N})$ is called a \emph{matroid game} if $E=\bigcup_{i\in N} E_i$, and each set system $\cB_i\subseteq 2^{E_i}$ forms the base set of some matroid $\cM_i=(E_i, \cB_i)$. While seemingly abstract, the class includes several prominent application domains, such as UFL games. In a UFL game, the resources are facilities (e.g. common transport hubs) and the players incur delay $d_{i,e}$ in addition to their cost shares for opening used facilities. Every player $i$ chooses exactly one resource, that is $|B_i|=1$ for all $B_i\in \cB_i$ and $i \in N$ and hence $\cB_i$ corresponds to a \emph{uniform matroid of rank one}. Recall that every base $B$ of a matroid $\cM_i=(E_i, \cB_i)$ has the same cardinality which we denote with $\rk_i$ (the rank of $\cM_i$).

In the following, instead of fixed costs on the resource, we allow
for general \emph{subadditive} cost functions $c_e:2^N\rightarrow \R_+, e\in E$. $c_e$ is called \emph{subadditive}, if it satisfies (1) $c_e(S)\leq c_e(T)$ for all $S\subseteq T\subseteq N$, and (2) $c_e(S+\{i\})\leq c_e(S)+c_e(\{i\})$ for all $S\subset N, i\in N$. Note that subadditive functions include fixed costs and discrete concave costs as a special case including the possibility of weighted demands as in weighted congestion games.

Let us denote the cost of the cheapest alternative of player $i$ to resource $e$ for profile $B\in\cB$ by 
$ \Delta_i^e(B):=\min_{\substack{f\in E\\B_i+f-e\in\cB_i}}{\left(c_f(B_i+f-e,B_{-i})+d_{i,f}\right)}. $
Here we use the intuitive notation $c_e(B):=c_e(N_e(B))$.
We recapitulate a characterization of enforceable strategy profiles obtained
in~\cite{HarksF14}.\footnote{The original characterization in~\cite{HarksF14} was proven for weighted players and load-dependent non-decreasing cost functions but the proof also works for subadditive cost functions.}

\begin{lemma}\label{dech}
A collection of bases $B=(B_1,\dots, B_n)$ is enforceable by a separable protocol if and only if the following two properties are satisfied. Note that \eqref{transp} implies that each summand $\Delta^e_i(B)-d_{i,e}$ in \eqref{share} is nonnegative.
\begin{align}\label{transp}\tag{D1}
   d_{i,e}& \leq \Delta^e_i(B) \text{ for all }i\in N, e\in B_i\\
  \label{share}\tag{D2}
    c_e(B)&\leq \sum_{i\in N_e(B)}{\left(\Delta^e_i(B)-d_{i,e}\right)}  \text{ for all }e\in E.
\end{align}
\end{lemma}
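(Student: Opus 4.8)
\emph{Overall approach.} I would prove both directions through the standard local‑optimality criterion for minimum‑weight matroid bases (a basis $B$ is of minimum weight iff $w_e\le w_f$ for every $e\in B$ and every $f$ with $B-e+f$ a basis). The first step I would carry out is a reduction that holds for \emph{any} budget‑balanced and separable protocol $\Xi$: fix a player $i$ and freeze the others' strategies $B_{-i}$; by separability the cost share $\xi_{i,g}(\cdot)$ that $i$ pays on a resource $g\in E_i$ whenever $g$ lies in her chosen basis depends only on the user set $N_g=\{j\neq i:\ g\in B_j\}\cup\{i\}$, hence equals a fixed number $w_g:=\xi_{i,g}(N_g)+d_{i,g}$ independent of the rest of her basis. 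Therefore $i$'s private cost of any $B_i'\in\cB_i$ is $w(B_i')=\sum_{g\in B_i'}w_g$, so $B_i$ is a best response to $B_{-i}$ iff it is a minimum‑weight basis of $\cM_i$ w.r.t. $w$. Spelling out the swap criterion: for $e\in B_i$ one has $w_e=\xi_{i,e}(B)+d_{i,e}$ (as $N_e(B)=N_e$), and for a replacement $f\notin B_i$ with $B_i-e+f\in\cB_i$ one has $N_f=N_f(B)\cup\{i\}=N_f(B_i-e+f,B_{-i})$, so budget balance gives $w_f\le c_f(B_i-e+f,B_{-i})+d_{i,f}$.

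\emph{Necessity.} Assuming $\Xi$ is separable and $B$ is a PNE, I would apply the reduction: minimum‑weight optimality of each $B_i$ yields $\xi_{i,e}(B)+d_{i,e}\le c_f(B_i-e+f,B_{-i})+d_{i,f}$ for every admissible $f$, and together with $\xi_{i,e}(B)\le c_e(B)$ (budget balance, covering the term $f=e$) this gives $\xi_{i,e}(B)+d_{i,e}\le\Delta^e_i(B)$ for all $i\in N$, $e\in B_i$. Dropping $\xi_{i,e}(B)\ge 0$ yields \eqref{transp}; summing $\xi_{i,e}(B)\le\Delta^e_i(B)-d_{i,e}$ over $i\in N_e(B)$ and using $c_e(B)=\sum_{i\in N_e(B)}\xi_{i,e}(B)$ yields \eqref{share}.

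\emph{Sufficiency.} Assuming \eqref{transp} and \eqref{share}, I would construct a protocol whose cost shares on a resource $e$ depend only on the user set $T=N_e(S)$: if $T=N_e(B)$, split $c_e(B)$ among the players of $T$ so that $i$ pays at most $\Delta^e_i(B)-d_{i,e}$ — feasible because these bounds are nonnegative by \eqref{transp} and sum to at least $c_e(B)$ by \eqref{share} (e.g.\ scale them proportionally to total exactly $c_e(B)$); if $T\supsetneq N_e(B)$, charge $c_e(T)$ entirely to one player of the nonempty set $T\setminus N_e(B)$; in every remaining case use an arbitrary budget‑balanced split. This is budget‑balanced and separable by construction, and stability follows once $B$ is shown to be a PNE, which I would check via the reduction: for $e\in B_i$ we arranged $w_e=\xi_{i,e}(B)+d_{i,e}\le\Delta^e_i(B)$; for $f\notin B_i$ with $B_i-e+f\in\cB_i$, the profile $(B_i-e+f,B_{-i})$ has user set $N_f(B)\cup\{i\}\supsetneq N_f(B)$ on $f$ with $i$ as the unique new player, so the protocol charges $i$ exactly $c_f(B_i-e+f,B_{-i})$, whence $w_f=c_f(B_i-e+f,B_{-i})+d_{i,f}\ge\Delta^e_i(B)\ge w_e$. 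Hence no single swap improves $B_i$, so $B_i$ is a minimum‑weight basis and $B$ is a PNE.

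\emph{Main obstacle.} The only step with genuine content is turning ``$B_i$ is a best response'' into the single‑swap inequality expressed through $\Delta^e_i(B)$: it relies both on the fact that, under separability, a unilateral deviation of $i$ faces fixed per‑resource weights, and on the matroid fact that local optimality of a basis implies global optimality. The rest — the proportional split enabled by \eqref{share}, the budget‑balance bookkeeping, and checking that the constructed protocol is separable — is routine. As the footnote after the statement indicates, this is the argument of \cite{HarksF14}; the passage to subadditive $c_e$ changes nothing, since $c_e$ is used only through budget balance and the definition of $\Delta^e_i(B)$.
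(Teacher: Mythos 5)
Your proof is correct. Note that the paper itself does not prove Lemma~\ref{dech}; it only recapitulates the characterization from the cited work of Harks et al.\ (with a footnote that the original argument extends to subadditive costs), so there is no in-paper proof to compare against. Your argument — reducing a unilateral deviation under a separable protocol to a fixed per-resource weight $w_g$ (full cost $c_g(N_g(B)\cup\{i\})+d_{i,g}$ off the equilibrium basis, share $\xi_{i,g}(B)+d_{i,g}$ on it), invoking the single-swap characterization of minimum-weight matroid bases, and for sufficiency splitting $c_e(B)$ within the caps $\Delta^e_i(B)-d_{i,e}$ while charging any new user the full cost — is exactly the intended route, and you correctly handle the $f=e$ term in the minimum via budget balance.
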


\begin{algorithm}[t]
\DontPrintSemicolon
\SetKwInput{Input}{Input}
\SetKwInput{Output}{Output}
\Input{Congestion model $(N, \cB, c, d)$ and profile $B\in \cB$}
\Output{Enforceable profile $B'$ with $C(B')\leq C(B)$.}
\vspace{0.25cm}
Set $B' \leftarrow B$\;
\While{there is $e \in E$ that satisfies at least one of the following conditions:
\begin{minipage}[b]{0.48\linewidth}
\begin{equation}\label{eq:viol-delay} d_{i,e}> \bar \Delta^e_i(B') \text{ for some }i\in N_e(B')\end{equation}
\end{minipage}
or
\begin{minipage}[b]{0.48\linewidth}
\begin{equation}\label{eq:viol-cost}
   c_e(B')> \sum_{i\in N_e(B')}{\left(\bar \Delta^e_i(B')-d_{i,e}\right)}
\end{equation}
\end{minipage}
}{
\If{\eqref{eq:viol-delay} \text{holds true for some }$i\in N_e(B')$}{ 
Let $f_i\in \arg\min\limits_{\substack{f\in E\\B'_i+f-e\in\cB_i}}{\pi_i^f}$\;
Update $B'_i \leftarrow B'_i+f_i-e$\;\label{move1}
}
\ElseIf{\eqref{eq:viol-cost} holds true}{
\While{\eqref{eq:viol-cost} holds true on $e$ \label{alg:second-while}}{
Pick $i\in N_e(B')$ with $\pi_e^i > \bar \Delta^e_i(B')$\;\label{alg:pick}
Let $f_i\in \arg\min\limits_{\substack{f\in E\\B'_i+f-e\in\cB_i}}{\pi_i^f}$\;
Update $B'_i\leftarrow B'_i+f_i-e$\;\label{move2}
}
}}
\caption{Transforming any profile $B$ into an enforceable profile $B'$\label{alg:matroidTransform}}
\end{algorithm}

\begin{remark}
The characterization was used in~\cite{HarksF14} to prove that an optimal collection of bases is enforceable. This implies a PoS of $1$ for a separable cost sharing protocol that relies on the optimal profile. As such, the protocol is not efficiently computable (unless $P=NP$).
\end{remark}

In the following, we devise a black-box reduction in Algorithm~\ref{alg:matroidTransform}. It takes as input an arbitrary
collection of bases $B$ and transforms them \emph{in polynomial time} into an enforceable set of bases $B'$ of lower cost $C(B')\leq C(B)$. We define for each $i \in N, e\in E$ a \emph{virtual cost value} $\pi_i^e = c_{e}(\{i\}) + d_{i,e}$,
and for each $B\in \cB, i \in N, e\in E$ a \emph{virtual deviation cost}
$\bar \Delta^e_i(B) = \min_{\substack{f\in E\\B_i+f-e\in\cB_i}}{\pi_i^f}.$
The algorithm now iteratively checks whether \eqref{transp} and \eqref{share} from Lemma~\ref{dech} hold true (in fact it checks this condition for smaller values on the right hand side given by the virtual values), and if not, exchanges one element of some player. We show that the algorithm terminates with an enforceable profile after polynomially many steps.

\begin{theorem}
Let $B$ be a strategy profile for a matroid congestion model with subadditive costs. There is an enforceable strategy $B'$ with $C(B')\leq C(B)$ that can be computed in at most $n\cdot m\cdot \rk(\cB)$ iterations of the while-loop in Algorithm~\ref{alg:matroidTransform}, where $\rk(\cB) = \max_{i\in N}\rk_i$.
\end{theorem}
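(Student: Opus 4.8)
The plan is to prove three facts about Algorithm~\ref{alg:matroidTransform}: whenever the while-loop terminates, the output $B'$ meets the hypotheses of Lemma~\ref{dech} and is therefore enforceable; the total cost never increases along the run, so $C(B')\le C(B)$; and at most $n\cdot m\cdot\rk(\cB)$ element exchanges (hence also that many while-loop iterations, inner or outer) are performed, which in particular forces termination. The enforceability part rests on the elementary inequality $\bar\Delta^e_i(B)\le\Delta^e_i(B)$: any $f$ admissible in the definition of $\bar\Delta^e_i$ is also admissible for $\Delta^e_i$, and monotonicity of $c_f$ yields $c_f(B_i+f-e,B_{-i})\ge c_f(\{i\})$, hence $c_f(B_i+f-e,B_{-i})+d_{i,f}\ge\pi_i^f$. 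When the outer while-loop stops, no resource $e$ satisfies \eqref{eq:viol-delay} or \eqref{eq:viol-cost}, i.e.\ $d_{i,e}\le\bar\Delta^e_i(B')$ for all $i\in N_e(B')$ and $c_e(B')\le\sum_{i\in N_e(B')}(\bar\Delta^e_i(B')-d_{i,e})$; together with $\bar\Delta^e_i\le\Delta^e_i$ these are exactly \eqref{transp} and \eqref{share}, so Lemma~\ref{dech} gives the claim.

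For the iteration bound I would observe that every update step is a \emph{strictly improving swap} for the moving player $i$ with respect to the fixed weights $w_i(e):=\pi_i^e$. In branch \eqref{eq:viol-delay} the chosen $f_i$ satisfies $\pi_i^{f_i}=\bar\Delta^e_i(B')<d_{i,e}\le\pi_i^e$; in branch \eqref{eq:viol-cost} the picked $i$ has $\pi_i^e>\bar\Delta^e_i(B')=\pi_i^{f_i}$, and such an $i$ exists because subadditivity gives $c_e(B')\le\sum_{i\in N_e(B')}c_e(\{i\})$, so $\sum_{i\in N_e(B')}\big(\pi_i^e-\bar\Delta^e_i(B')\big)>0$ while every summand is nonnegative. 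In both cases $\pi_i^{f_i}<\pi_i^e$ strictly, hence $f_i\ne e$ and a genuine exchange occurs. Since $\bar\Delta^e_i(B')$ depends only on $B'_i$, I can treat each player in isolation: replacing a base element of $\cM_i$ by a strictly lighter one makes the counts $|\{e\in B'_i:w_i(e)>t\}|$ non-increasing in $t$ and strictly smaller for $t$ between the two weights, so the sorted-decreasing weight vector of $B'_i$ drops in every coordinate, strictly in at least one. As each of the $\rk_i$ coordinates is monotone over time and takes values in $\{\pi_i^e:e\in E_i\}$, a set of size at most $m$, player $i$ performs at most $\rk_i(m-1)$ exchanges; summing over players bounds the total by $n\,\rk(\cB)(m-1)\le n\,m\,\rk(\cB)$, and termination follows.

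For the cost bound I would follow $C$ through one outer iteration. A single branch-\eqref{eq:viol-delay} move of $i$ from $e$ to $f_i$ changes $C$ by at most $(d_{i,f_i}-d_{i,e})+c_{f_i}(\{i\})-\delta_e=\bar\Delta^e_i(B')-d_{i,e}-\delta_e$, where I used subadditivity on $f_i$ and wrote $\delta_e\ge0$ for the resulting drop of $c_e$; by \eqref{eq:viol-delay} this is negative. For branch \eqref{eq:viol-cost} I would not estimate moves one at a time but consider the whole inner while-loop: letting $N^0_e\supseteq N^1_e$ be the player sets on $e$ before and after it, and noting that $\bar\Delta^e_i$ is unchanged for players still on $e$ while the successive drops of $c_e$ telescope to $c_e(N^0_e)-c_e(N^1_e)$, summing the above per-move estimates gives a total change of at most $\big(\sum_{i\in N^0_e}(\bar\Delta^e_i-d_{i,e})-c_e(N^0_e)\big)+\big(c_e(N^1_e)-\sum_{i\in N^1_e}(\bar\Delta^e_i-d_{i,e})\big)$. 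The first parenthesis is negative because \eqref{eq:viol-cost} held when the inner loop started, the second is $\le0$ because \eqref{eq:viol-cost} fails when it stops; so every outer iteration strictly decreases $C$ and hence $C(B')\le C(B)$.

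I expect this last, batched cost estimate to be the crux. Step by step the statement is false: a single \eqref{eq:viol-cost} swap can raise $C$, e.g.\ when a player abandons a resource still used by others to join one already paid for by others. One therefore has to amortize over the entire inner while-loop and exploit the fact that it runs precisely until \eqref{eq:viol-cost} is repaired on $e$, which is exactly what makes the telescoped bound close with the correct sign.
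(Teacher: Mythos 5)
Your proposal is correct and follows essentially the same route as the paper: enforceability via the observation that the algorithm's stopping conditions are the conditions of Lemma~\ref{dech} with the smaller virtual values $\bar\Delta^e_i\le\Delta^e_i$; termination by noting that every exchange strictly decreases the profile-independent virtual value $\pi_i^e$ of the swapped element (the paper phrases this with ``packets'' rather than sorted weight vectors, but the counting is identical); and the cost bound by a per-move estimate for the delay branch and an amortized estimate over one full run of the inner while-loop for the cost branch. Your two-bracket telescoping bound is algebraically the same as the paper's combination of its inequalities \eqref{eq:cost1}--\eqref{eq:cost3}, and you correctly identify that batching over the inner loop is the crux.
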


\begin{proof}
First, observe that if \eqref{transp} and \eqref{share} from Lemma~\ref{dech} hold true for smaller values $0\leq \bar \Delta^e_i (B)\leq \Delta^e_i(B), i\in N, e\in E$, then the profile $B$ is also enforceable. Hence, if the algorithm terminates, the resulting strategy profile $B'$ will be enforceable. 

To show that the algorithm is well-defined, we only need to check Line~\ref{alg:pick}. By
subadditivity we get 
$\sum_{i\in N_e(B')}c_e(\{i\}) \geq c_e(B').$
Thus, whenever
$c_e(B')> \sum_{i\in N_e(B')}\left( \bar \Delta^e_i(B')-d_{i,e}\right),$
there is an $i\in N_e(B')$ with
$c_e(\{i\})+d_{i,e}>\bar \Delta_i^e(B')$.

It is left to bound the running time. For this we consider player $i$ and the matroid bases $\cB_i$. We interpret a basis $B_i\in \cB_i$ as distributing exactly $\rk_i$  unit sized packets over the resources in $E$. This way, we can interpret the algorithm
as iteratively moving packets away from those resources $e\in E$ for which either \eqref{eq:viol-delay} or \eqref{eq:viol-cost}
holds true. We give each packet a unique ID $i_k, k=1,\dots, \rk_i$. For $B_i\in \cB_i$, let $e_{i_k}$ denote the resource on which packet $i_k$ is located. 
We now analyze the two types of packet movements during the execution of the algorithm. For a packet movement executed in Line~\ref{move1} of Algorithm~\ref{alg:matroidTransform}, we have $d_{i,e}> \bar \Delta_i^e(B')$, thus, when packet $i_k$ located on $e = e_{i_k}$ is moved to $f_i$, it holds that $\pi_i^{e_{i_k}} = \pi_i^e \geq d_{i,e} > \bar \Delta_i^e(B') =\pi_i^{f_i}$. For packet movements executed in Line~\ref{move2}, then by the choice of player $i\in N_e(B')$ (see Line~\ref{alg:pick}) for the corresponding packet $i_k$ it holds $\pi_i^{e_{i_k}} = \pi_i^e > \bar \Delta^e_i(B') = \pi_i^{f_i}$. In both cases we obtain $ \pi_i^e > \pi_i^{f_i}$.
Hence, every movement of a single packet $i_k$ is in strictly decreasing order of virtual value of the resource.
Note that the virtual cost value $\pi_i^e$ does not depend on the profile $B$. Thus, there are at most $m$ different virtual cost values that a packet $i_k$ of player $i$ can experience, and thus packet $i_k$ can move at most $m-1$ times. The following is an upper bound on the total number of packet movements for all players
$\sum_{i\in N} \rk_i \cdot (m-1) \leq n \cdot m \cdot \rk(\cB).$

It is left to argue that the final output $B'$ has lower cost. We prove this inductively by the different types of packet movements. Consider first a packet movement of type \eqref{eq:viol-delay}. Let $B$ and $B'$ be the profiles before and after packet $i_k$ has been moved from $e$ to $f_i$, respectively. We obtain
\[
\begin{aligned}
C(B')-C(B)&= (c_{f_i}(B')-c_{f_i}(B)+d_{i,f_i}) -(c_e(B)-c_e(B')+d_{i,e})\\
&\leq c_{f_i}(\{i\})+d_{i,f_i} -(c_e(B)-c_e(B')+d_{i,e})\\
&= \bar \Delta^e_i(B) - d_{i,e}+(c_e(B')-c_e(B))\\
&\leq \bar \Delta^e_i(B) - d_{i,e} \quad < \quad 0.
\end{aligned}
\]
The first inequality follows from subadditivity, the second inequality from monotonicity of costs $c_e$. The last strict inequality follows from assumption~\eqref{eq:viol-delay}.

Now consider packet movements of type \eqref{eq:viol-cost}. We treat all movements occurring in one run of the while loop in Line~\ref{alg:second-while}. Let $B$ denote the profile before and $B'$ after all these movements. Let $T_e(B)\subseteq N_e(B)$ denote the set of those players whose packet $i_k$ on $e$ is moved to $f_i$ during the while loop.
Let
$ F_e(B) = \bigcup_{i\in T_e(B)}\{f_i\} $ and for  $i\in T_e(B)$ define
$T_{f_i}(B) =\{j \in T_e(B) \mid f_j=f_i\}.$
We derive some useful observations.
Before entering the while loop, it holds
\begin{equation}\label{eq:cost1} 
c_e(B)>\sum_{i\in N_e(B)}\Big(\bar \Delta_i^e(B)-d_{i,e}\Big)\\
= \sum_{i\in N_e(B)\setminus T_e(B)}\Big(\bar \Delta_i^e(B)-d_{i,e}\Big)+
\sum_{i\in T_e(B)}\Big(\bar \Delta_i^e(B)-d_{i,e}\Big).
\end{equation}
Moreover, after exiting the while loop it holds
\begin{equation}\label{eq:cost2} c_e(B')\leq \sum_{i\in N_e(B)\setminus T_e(B)}\Big(\bar \Delta_i^e(B)-d_{i,e}\Big).\end{equation}
Thus, combining~\eqref{eq:cost1} and~\eqref{eq:cost2} we get 
\begin{equation}\label{eq:cost3}
c_e(B)-c_e(B')>\sum_{i\in T_e(B)}\Big(\bar \Delta_i^e(B)-d_{i,e}\Big).
\end{equation}

Putting everything together, we obtain 
\[
\begin{aligned}
C(B')-C(B)&=\sum_{f_i\in F_e(B)} \left(c_{f_i}(B')-c_{f_i}(B)\right)+\sum_{i\in T_e(B)}d_{i,f_i}-\Big(c_e(B)-c_e(B')+ \sum_{i\in T_e(B)}d_{i,e}\Big)\\
&\leq \sum_{f_i\in F_e(B)}  \sum_{j\in T_{f_i}(B)} c_{f_i}(\{j\})+\sum_{i\in T_e(B)}d_{i,f_i}-\Big(c_e(B)-c_e(B')+ \sum_{i\in T_e(B)}d_{i,e}\Big)\\
&= \sum_{i\in T_e(B)} \bar \Delta_i^e(B)-\Big(c_e(B)-c_e(B')+ \sum_{i\in T_e(B)}d_{i,e}\Big)\quad < \quad 0,
\end{aligned}\]
where the first inequality follows from subadditivity and the last inequality follows from~\eqref{eq:cost3}. 
\end{proof}

\section{Connection Games without Delays}
\label{sec:fixed}

In this section, we study connection games in an undirected graph $G = (V,E)$ with a common source vertex $s \in V$. Every player $i$ wants to connect a player-specific terminal node $t_i \in V$ to $s$. Consequently, every strategy $P_i$ of player $i$ is an $(s,t_i)$-path in $G$. We denote the set of paths for player $i$ by $\cP_i$ and the set of profiles by $\cP$.

Note that when each edge cost contains a player-specific delay component $d_{i,e}$, we can take any multi-source multi-terminal connection game and introduce a new auxiliary source vertex $s$. Then connect $s$ to each $s_i$ with an auxiliary edge $e_i$, which has cost $d_{i,e_i} = 0$ and $d_{j,e_i} = M$, for some prohibitively large constant $M$. Now in any equilibrium and any optimal state of the resulting game, player $i$ will choose an $(s,t_i)$-path which begins with edge $e_i$. Moreover, $e_i$ does not generate additional cost for player $i$. As such, the optimal solutions, the Nash equilibria, and their total costs correspond exactly to the ones of the original multi-source multi-terminal game. Hence, in games with non-shareable player-specific delays, the assumption of a common source is without loss of generality and existing lower bounds on the price of stability apply~\cite{ChenRV10,harks2017}.

In this section, we instead focus on connection games with fixed shareable costs $c_e \ge 0$ and no player-specific delays $d_{i,e} = 0$, for all players $i$ and all edges $e \in E$. For the general multi-terminal multi-source case with such costs, it is straightforward to observe that the greedy algorithm analyzed by Gupta and Kumar~\cite{GuptaK15} can be turned into a separable protocol via the Prim-Sharing idea~\cite{ChenRV10}. This implies that we can obtain separable cost sharing protocols with a constant price of stability in polynomial time.

\begin{proposition}
For every connection game in undirected graphs with fixed costs, there is an enforceable profile that can be computed in polynomial time and yields a separable cost sharing protocol with constant price of stability.
\end{proposition}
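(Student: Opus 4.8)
The plan is to combine the recent constant-factor analysis of the greedy Steiner forest algorithm of Gupta and Kumar~\cite{GuptaK15} with the Prim-Sharing paradigm of Chen, Roughgarden and Valiant~\cite{ChenRV10}. First I would recall the structure of the greedy algorithm: it processes terminal pairs (here $(s,t_i)$ pairs, with a common source $s$) in a carefully chosen order and, for each pair, connects the two terminals by a shortest path in a graph where already-bought edges have cost zero. The output is a forest $F$ (in our single-source setting, in fact a tree) of cost at most $\alpha \cdot \mathrm{OPT}$ for the underlying Steiner forest problem, for the universal constant $\alpha$ from~\cite{GuptaK15}. The profile $P$ we output assigns to player $i$ the $(s,t_i)$-path that the greedy algorithm used when it connected $t_i$; since all edges of all these paths lie in $F$, we have $C(P) \le \mathrm{cost}(F) \le \alpha\cdot\mathrm{OPT}$, so it suffices to show $P$ is enforceable by a separable protocol.

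Next I would set up the Prim-Sharing cost shares on the edges of $F$. The idea is that the order in which the greedy algorithm buys edges induces a time-stamp on each edge of $F$, and each player $i$ is charged, on each edge $e$ of her path $P_i$, according to a ``first come, first served'' rule: player $i$ pays the full cost $c_e$ of edge $e$ only if she is the player responsible for buying $e$ (i.e. $e$ was bought during the iteration that connected $t_i$), and pays $0$ on every edge of $P_i$ that was already present when her iteration started. This is budget-balanced on each edge used in $P$ because exactly one iteration pays for each bought edge, and it is separable because the cost share of a player on an edge depends only on the set of players using that edge via the canonical lexicographic/time-stamp extension rule described in the introduction (on profiles $S' \neq P$ we fall back to the generic assignment rule guaranteeing budget balance). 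The key point, exactly as in~\cite{ChenRV10}, is that with this sharing the game induced on the profile $P$ is a PNE: if player $i$ deviates to any other $(s,t_i)$-path $P_i'$, then on the new edges she would in the worst case have to pay their full cost, and the greedy algorithm's choice of a shortest path (with respect to the zero-cost discounting of already-bought edges at the time of her iteration) guarantees that the path she was assigned is no more expensive than any alternative she could have taken then; since no edges bought strictly after her iteration can make an alternative cheaper for her (those edges, if on her deviation path, she must pay for, while on her current path she pays nothing for edges bought before her iteration), her current cost share is minimal. Hence no profitable deviation exists.

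Finally I would address the efficiency requirements. The greedy algorithm of~\cite{GuptaK15} runs in polynomial time, the resulting profile $P$ and the time-stamps are computed in polynomial time, and the cost shares on $P$ are read off directly from the time-stamps; the extension to all other profiles is given implicitly by the fixed budget-balanced rule from the introduction, which is polynomial-time computable. Therefore the protocol is polynomial-time computable, separable, and has price of stability at most the approximation ratio $\alpha$ of the greedy algorithm, which is a constant. The main obstacle is the PNE verification: one must carefully argue that the shortest-path property used by the greedy algorithm at the moment player $i$ is processed translates, under the Prim-Sharing cost shares, into a global no-deviation guarantee even against deviation paths that use edges inserted later in the greedy order. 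This is precisely the argument that makes Prim-Sharing work in~\cite{ChenRV10} for single-source games, and the only thing to check is that the Gupta--Kumar ordering does not break it — which it does not, since the relevant property (each player's assigned path is a shortest path in the ``contracted'' graph at her processing time) is exactly the invariant the greedy algorithm maintains.
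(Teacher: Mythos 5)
The paper gives no written proof of this proposition --- it is asserted as ``straightforward to observe'' in the sentence immediately preceding it --- and your overall recipe (run the Gupta--Kumar greedy, then assign Prim-style ``first come, first served'' shares along the greedy order, with the lexicographic fallback rule off-equilibrium) is exactly the combination the paper has in mind. However, you have misread the scope of the statement, and this creates a real gap. The sentence introducing the proposition says explicitly that it concerns \emph{the general multi-terminal multi-source case}: the underlying optimization problem is Steiner forest, the greedy output is a genuine forest of several components, and there is no common source $s$. By specializing to ``$(s,t_i)$ pairs with a common source $s$'' and observing that the output ``is in fact a tree,'' you have reduced the claim to the single-source setting, where it is already known (Prim-Sharing alone gives price of stability $2$ in~\cite{ChenRV10}) and is in any case subsumed by the much stronger black-box reduction of Theorem~\ref{thm:singleSource} proved right after it. The Gupta--Kumar constant is far worse than $2$, so in your setting the proposition you prove is both weaker than what is intended and weaker than what is already in the literature.

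The part that actually needs an argument is precisely what disappears under your simplification. The gluttonous algorithm of~\cite{GuptaK15} repeatedly connects the two \emph{closest unsatisfied terminals} --- which need not form a demand pair and need not involve a fixed root --- so a player $i$'s eventual $s_i$--$t_i$ connection in the forest is stitched together from several merge steps, each charged (under any Prim-style rule) to possibly different players, and possibly more than one of them to player $i$ herself. The no-deviation argument then cannot be phrased as ``her assigned path was a shortest path in the contracted graph at her processing time,'' because there is no single processing time for her pair; one has to bound the \emph{sum} of the connection costs charged to her against the cost of a direct $s_i$--$t_i$ deviation in which new edges are paid in full. Your proof does not engage with this, and the invariant you invoke (``each player's assigned path is a shortest path in the contracted graph at her processing time'') is the single-source Prim invariant, not the invariant maintained by the Steiner-forest greedy. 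To repair the proof you would need to specify who is charged at each merge step of the gluttonous algorithm and show that each player's total charge is at most the contracted $s_i$--$t_i$ distance at some point in the run; the single-source portion of your write-up is otherwise fine.
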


For single-source games with fixed costs, existing results for cost sharing games with arbitrary sharing imply that an optimal profile is always enforceable~\cite{AnshelevichDTW08,ChenRV10}. We here provide a significantly stronger result for polynomial-time computation of cheap enforceable profiles.

\begin{theorem}
\label{thm:singleSource}
Let $P$ be a strategy profile for a single-source connection game with fixed costs. There is an enforceable strategy $P'$ with $C(P')\leq C(P)$ that can be computed by Algorithm~\ref{alg:singleSourceTransform} in polynomial time.
\end{theorem}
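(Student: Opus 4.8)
The plan is to produce, from the arbitrary input profile $P$, a new profile $P'$ that is \emph{enforceable} because it is a pure Nash equilibrium of the associated game with arbitrary sharing; the theorem then follows from the Proposition in the preliminaries, and since the whole reduction only reroutes players it stays polynomial-time. First I would record a structural description of enforceable profiles in this domain that plays the role of Lemma~\ref{dech} for matroids. Writing $E_P=\bigcup_{i\in N}E(P_i)$ for the set of allocated (``bought'') edges of a profile $P$, enforceability amounts to the existence of nonnegative, budget-balanced contributions of the shareable costs $c_e$, $e\in E_P$, to the players using $e$, such that no player can undercut her total contribution by rerouting onto edges she obtains for free. By LP duality this is a transportation/Hall-type feasibility condition on $E_P$; rooting the relevant subgraph at $s$ gives it a workable combinatorial form, namely that every edge must be ``affordable'' by the set of players whose $(s,t_i)$-path runs through it, relative to their cheapest alternative $(s,t_i)$-routes in $G$.

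Second, I would reduce to the case that $E_P$ is a tree: the $s$-component of $(V,E_P)$ already connects every $t_i$ to $s$, so we may pass to an inclusion-minimal connected subgraph of it that still joins $\{s\}\cup\{t_i:i\in N\}$ — this is a tree all of whose leaves are terminals — reroute each player onto her unique path in it, and thereby only decrease $C$. Rooting this tree at $s$, Algorithm~\ref{alg:singleSourceTransform} then iteratively removes the remaining obstructions to enforceability by rerouting players: as long as some player $i$ strictly gains from switching to a different $(s,t_i)$-path that exploits edges currently paid for by others (equivalently, as long as some edge is overpriced relative to its users' alternative-route budgets), move $i$ to a cheapest such path, i.e.\ a shortest $(s,t_i)$-path in the metric in which the edges in $\bigcup_{j\ne i}E(P_j)$ cost $0$ and all other edges cost $c_e$. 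The key accounting lemma — provable by the same inclusion/exclusion bookkeeping as in the matroid theorem — is that any one such rerouting leaves $C$ non-increasing: after cancelling the common part of the old and new path, the shortest-path property gives $c\big(P_i'\setminus(E_{P_{-i}}\cup P_i)\big)\le c\big(P_i\setminus(E_{P_{-i}}\cup P_i')\big)$, i.e.\ the newly introduced edges cost no more than the edges that become unused.

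Third — and this is the step I expect to be the crux — I would show the process halts after polynomially many moves. Unrestricted best-response dynamics need not terminate, and the standard fix of shaving an $\epsilon$ off the edge costs is exactly the device that, as the introduction notes, is incompatible with separable protocols; so the moves have to be scheduled with care. Following the template of Algorithm~\ref{alg:matroidTransform}, I would pin down a \emph{profile-independent} potential that strictly improves with each rerouting — for instance, process terminals in nondecreasing order of their distance from $s$ under the fixed costs $c$ and argue that a player, once rerouted, is never reactivated; or, alternatively, exhibit an integral quantity bounded by a polynomial in $n$ and $m$ that strictly decreases at every move. Verifying that such a schedule both terminates quickly and leaves \emph{no} enforceability violation is the technical heart of the argument. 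Once it is in place, the terminal profile $P'$ meets the structural characterization from the first step, is therefore enforceable, and satisfies $C(P')\le C(P)$ by applying the accounting lemma along the (polynomially long) sequence of reroutings and along the initial tree-extraction step.
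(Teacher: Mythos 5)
Your outline gets the easy parts right (reduce to a tree profile rooted at $s$; reroute players onto cheaper alternatives that use others' edges for free; account for the cost change by cancelling the common part of old and new paths), but the step you yourself flag as ``the crux'' --- a schedule of reroutings that provably terminates in polynomial time and leaves no enforceability violation --- is exactly the step you do not supply, and the devices you gesture at are not the ones that work. The paper does not run a best-response-style loop over players at all. It processes the \emph{edges of the tree} $T$ in bottom-up order, one edge at a time: all edges of $T$ start at cost $0$, the current edge $e$ has its cost restored, each user's willingness to pay $\Delta_i^e$ is computed as the difference of her cheapest $(s,t_i)$-path cost with $e$ at cost $0$ versus at cost $c_e$ (deviations are single ``auxiliary edges'' encoding shortest paths in $G$), and either $c_e$ is shared within these budgets and $e$ is closed, or $e$ is dropped and the affected subtrees are reattached via one auxiliary edge per highest deviation vertex, paid entirely by the deviating player. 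The inductive invariant --- every player is happy with her current path given that open tree edges cost $0$ and closed edges are shared as fixed --- is what guarantees both correctness and termination: each edge of $T$ is touched exactly once, so the loop runs at most $|E(T)|$ times, and no player is ever ``reactivated.'' Your proposed alternatives (ordering terminals by distance from $s$, or an unspecified integral potential) are not developed, and an unrestricted ``reroute whenever someone strictly gains'' dynamic has no polynomial termination guarantee in this setting.

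Two further gaps. First, your opening move --- prove enforceability by exhibiting a PNE of the arbitrary-sharing game and citing the Proposition --- together with a ``Hall-type'' LP-duality characterization of enforceable profiles is not how the paper proceeds and is not obviously adequate: the paper constructs the separable cost shares $\hat c_e(i)$ explicitly during the bottom-up pass, and no transportation-type characterization is stated or needed for this domain. Second, you omit the final expansion step: the equilibrium is first established in the auxiliary graph $\hat G$, and one must then replace each auxiliary edge by its shortest path in $G$ and argue (via a lexicographic assignment of full edge costs to the first player whose expanded path contains the edge, and the observation that any edge a player pays less than $c_e$ for already lies on her own path) that the expanded profile remains a Nash equilibrium and only gets cheaper when the expanded paths overlap. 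Without the bottom-up schedule and this expansion argument, the proof is incomplete at its central point.
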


It is straightforward that for fixed costs we can transform each profile $P$ into a cheaper \emph{tree profile} $\hat{P}$, in which the union of player paths constitute a tree $T$. Over the course of the algorithm, we adjust this tree and construct a cost sharing for it in a bottom-up fashion. The approach has similarities to an approach for obtaining approximate equilibria for single-source cost sharing games with arbitrary sharing~\cite{AnshelevichDTW08}. However, our algorithm exploits crucial properties of separable protocols, thereby providing an exact Nash equilibrium and polynomial running time.

When designing a separable protocol based on a state $\hat{P}$, we can always assume that when a player $i$ deviates unilaterally to one or more edges $e \in G \setminus \hat{P}_i$, she needs to pay all of $c_e$. As such, player $i$ always picks a collection of shortest paths with respect to $c_e$ between pairs of nodes on her current path $\hat{P}_i$. All these paths in $G$ are concisely represented in the algorithm as ``auxiliary edges''. The algorithm initially sets up an auxiliary graph $\hat{G}$ given by $T$ and the set of auxiliary edges based on $\hat{P}$. It adjusts the tree $T$ by removing edges of $T$ and adding auxiliary edges in a structured fashion. 

We first show in the following lemma that this adjustment procedure improves the total cost of the tree, and that the final tree $\hat{T}$ is enforceable in $\hat{G}$. In the corresponding cost sharing, every auxiliary edge contained in $\hat{T}$ is completely paid for by a single player that uses it. In the subsequent proof of the theorem, we only need to show that for the auxiliary edges in $\hat{T}$, the edge costs of the corresponding shortest paths in $G$ can be assigned to the players such that we obtain a Nash equilibrium in $G$. The proof shows that the profile $P'$ evolving in this way is enforceable in $G$ and only cheaper than $P$.

\begin{algorithm}[h!]
\DontPrintSemicolon
\SetKwInput{Input}{Input}
\SetKwInput{Output}{Output}
\Input{Connection game $(N, G, (t_1,\ldots,t_n), s, c)$ and profile $P \in \cP$}
\Output{Enforceable profile $P'$ with $C(P')\leq C(P)$.}
\vspace{0.25cm}

Transform $P$ into a tree profile $\hat{P}$ and let $T \leftarrow \bigcup_i \hat{P}_i$\;
$\hat{c}_e(i) \leftarrow 0$, for all $e \in T, i \in N$\;
Insert $T$ into empty graph $G'$, root $T$ in $s$, number vertices of $T$ in BFS order from $s$\;
\ForEach{$i \in N$ and each $v_k, v_{k'} \in \hat{P}_i$ with $k > k'$}{
    Add to $\hat{G}$ an auxiliary edge $e = (v_k, v_{k'})$\;
    $P(v_k,v_k') \leftarrow$ shortest path in $G$ from $v_k$ to $v_{k'}$\;
    For all $j \in N$, set $\hat{c}_e(j) \leftarrow \sum_{f \in P(v_k,v_{k'})} c_f$\; 
} 
Label every $e \in T$ as ``open''\;

\ForEach{open $e \in T$ in bottom-up order}{
  $\hat{c}_e(i) \leftarrow c_e$, for all $i \in N_e(\hat{P})$\;
  $P_d(i) \leftarrow$ shortest $(s,t_i)$-path in $\hat{G}$ for edge costs $\hat{c}_e(i)$\;
  $\Delta_i^e \leftarrow \sum_{e' \in P_d(i)} \hat{c}_{e'}(i) - \sum_{e' \in \hat{P}_i, e' \neq e} \hat{c}_{e'}(i)$\;
  \If{$c_e \le \sum_{i \in N_e(\hat{P})} \Delta_i^e$}{
    For all $i \in N_e(\hat{P})$, assign $\hat{c}_e(i) \in [0, \Delta_i^e]$ such that $\sum_{i \in N_e(\hat{P})} \hat{c}_e(i) = c_e$\;
    Label $e$ as ``closed''\;
  }
  \Else{
    $D \leftarrow$ set of highest deviation vertices\;
    $P_T(e,v) \leftarrow$ path between higher node of $e$ and $v$ in $T$\;
    Remove all paths $P_T(e,v)$ from $T$ and $\hat{G}$\;
    \ForEach{$v \in D$}{
      Pick one auxiliary edge $e' \in P_d(i)$, for some $i \in N_e(\hat{P})$, such that $e' = (v,u)$ with $u$ a node above $e$ in $T$, and add $e'$ to $T$\;
      $\hat{P}_i \leftarrow P_d(i)$\;
      \ForEach{player $j \neq i$ with $t_j$ below $v$ in $T$}{
         $\hat{P}_j \leftarrow$ ($\hat{P}_j$ from $t_j$ to $v$) $\cup$ ($P_d(i)$ from $v$ to $s$)\;
         $\hat{c}_{e'}(j) \leftarrow 0$\;
      }
      Label $e'$ as ``closed''\;
    }
  }
}
For every $i \in N$, compute $P'_i$ by replacing in $\hat{P}_i$ every auxiliary edge $e=(u,v)$ by the corresponding shortest path $P(u,v)$ in $G$
\caption{Transforming any profile $P$ into an enforceable profile $P'$\label{alg:singleSourceTransform}}
\end{algorithm}

\begin{lemma}
\label{lem:singleSource}
Algorithm~\ref{alg:singleSourceTransform} computes a cost sharing of a feasible tree $\hat{T}$ in the graph $\hat{G}$. The total cost $C(\hat{T}) \le C(T)$, every auxiliary edge in $\hat{T}$ is paid for by a single player, and the corresponding profile $\hat{P}$ is enforceable in $\hat{G}$.
\end{lemma}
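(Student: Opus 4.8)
The plan is to argue by induction over the iterations of the bottom-up \texttt{ForEach}-loop, maintaining four invariants on the current auxiliary tree $T$, the current (tree) profile $\hat P$, and the current cost-share table $\hat c$: (i) $\hat P$ is a tree profile in $\hat G$ with $\bigcup_i \hat P_i = T$, so that each $\hat P_i$ is a valid $(s,t_i)$-path in $\hat G$; (ii) every closed edge $f$ of $T$ is budget balanced ($\sum_{i\in N_f(\hat P)}\hat c_f(i)=c_f$, and $0$ for non-users) and every closed \emph{auxiliary} edge of $T$ is paid in full by exactly one user; (iii) $C(T)$ does not increase; (iv) \emph{local stability}: for every closed edge $f$ of $T$ and every $i\in N_f(\hat P)$, writing $T^{\uparrow}_f$ for the vertex set of the component of $T-f$ containing $s$, we have $\sum_{e\in \hat P_i,\, e\preceq f}\hat c_e(i)\le \min_{w\in T^{\uparrow}_f}\operatorname{dist}_{\tilde c(\cdot,i)}(t_i,w)$ in $\hat G$, where $\tilde c_e(i)$ is the amount $i$ would pay on edge $e$ after a unilateral deviation under the standard separable extension, i.e.\ $\hat c_e(i)$ if $i\in N_e(\hat P)$, $c_e$ if $N_e(\hat P)=\emptyset$, and $0$ otherwise. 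Phrasing (iv) in terms of \emph{reaching} the upper component $T^{\uparrow}_f$ rather than in terms of whole $(s,t_i)$-path costs is essential: since edges are processed strictly bottom-up, once $f$ is closed the part of $T$ at-or-below $f$ that $i$ uses is frozen (it changes only if a later step deletes $f$ itself), so the left side of (iv) never moves, while subsequent assignments of positive shares above $f$ are irrelevant to its right side.

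For the \textbf{if}-branch ($c_f\le\sum_{i\in N_f(\hat P)}\Delta_i^f$), $T$ is unchanged, so (i) and (iii) are immediate, and (ii) holds because the algorithm splits $c_f$ among the users with $\hat c_f(i)\in[0,\Delta_i^f]$. The crux is (iv) for the newly closed $f$. Here one combines the bottom-up order (all edges above $f$ still cost $0$) with invariant (iv) for the edge of $\hat P_i$ immediately below $f$, which forces $\operatorname{dist}_{\tilde c(\cdot,i)}(t_i,\text{lower endpoint of }f)=\sum_{e\prec f}\hat c_e(i)$; hence the $\Delta_i^f$ computed by the algorithm equals $\min_{w\in T^{\uparrow}_f}\operatorname{dist}_{\tilde c(\cdot,i)}(t_i,w)-\sum_{e\prec f}\hat c_e(i)$, and the inequality $\hat c_f(i)\le\Delta_i^f$ becomes exactly (iv) for $f$. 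The freedom in the split (any $\hat c_f(i)\le\Delta_i^f$) is what makes this consistent across the users of $f$, and (iv) for the edges below $f$ persists by the freezing remark above.

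The \textbf{else}-branch ($c_f>\sum_{i\in N_f(\hat P)}\Delta_i^f$) is the heart of the argument. One first notes that the strict violation forces every user of $f$ to possess a genuine bypass that rejoins $T$ above $f$: if some user's only route to $T^{\uparrow}_f$ went through $f$, then $\Delta_i^f=c_f$ and the condition would hold. Consequently the set $D$ of highest deviation vertices is well defined and yields a frontier — each terminal below $f$ lies below exactly one $v\in D$ — so deleting the paths $P_T(f,v)$ and inserting the chosen auxiliary edges $(v,u)$ with $u$ above $f$ again produces a tree in $\hat G$, and the rerouted $\hat P_j$ remain valid $(s,t_j)$-paths; this gives (i), and every inserted auxiliary edge is paid in full by its designated player and $0$ by the rerouted bystanders, giving (ii). For (iii), an accounting parallel to the cost-decrease argument of Section~\ref{sec:matroids} bounds the total cost of the inserted auxiliary edges by $\sum_{i\in N_f(\hat P)}\Delta_i^f<c_f$ plus the cost of the deleted interior paths, so $C(T)$ strictly drops. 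Finally (iv): $f$ leaves $T$, the edges below it are untouched so their local stability persists, and for an inserted edge $(v,u)$ with designated player $i$ the new $\hat P_i=P_d(i)$ is a shortest $(s,t_i)$-path (with $f$ at its temporary cost and the upper part free), so the prefix of $P_d(i)$ up to $u$ realizes $\operatorname{dist}_{\tilde c(\cdot,i)}(t_i,u)$, which — since any walk into $T^{\uparrow}_{(v,u)}$ can be completed to $s$ inside that component for free — is the minimum in (iv); for a rerouted bystander $j$, $\hat c_{(v,u)}(j)=0$ and its portion below $v$ is unchanged, so (iv) at $(v,u)$ reduces to (iv) at the former top edge of $\hat P_j$.

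Since each iteration closes or deletes an open edge, the loop terminates with $T=\hat T$; applying (iv) to the topmost edge $f$ of each $\hat P_i$ (where $T^{\uparrow}_f=\{s\}$) gives that $\hat P_i$ is a cheapest $(s,t_i)$-path with respect to $\tilde c(\cdot,i)$, which together with (ii) says that $(\hat P,\hat c)$ is a pure Nash equilibrium of the arbitrary-sharing game on $\hat G$; by the equivalence recalled before the statement this yields a separable protocol enforcing $\hat P$ in $\hat G$. Invariants (iii) and (ii) then give $C(\hat T)\le C(T)$ and the single-payer property for auxiliary edges. I expect the main obstacle to be the else-branch — pinning down the combinatorics of $D$ so that the reconnection is a tree with valid rerouted paths, and carrying out the cost-decrease accounting — whereas the preservation of (iv) should be comparatively direct once one has set it up to speak only about reaching the upper component and exploits that the loop proceeds strictly bottom-up.
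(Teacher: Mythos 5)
Your proposal follows essentially the same route as the paper's proof: a bottom-up induction over the tree edges with the same two-case split (share $c_e$ among users if the $\Delta_i^e$ suffice; otherwise reroute via the highest deviation vertices, charge each inserted auxiliary edge to its designated deviating player, and let rerouted bystanders ride for free), and the same strict cost decrease upon restructuring. Your invariant (iv) is just a quantified restatement of the paper's inductive "players are happy with $\hat{P}_i$ when all open edges cost $0$" assumption, and your cost accounting makes explicit what the paper dismisses as straightforward, so the argument is correct and matches the paper's.
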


\begin{proof}
After building $\hat{G}$, the algorithm considers $T$ rooted in the source $s$. Initially, all edges of $T$ are assumed to have zero cost for all players. All edges of $T$ are labelled ``open''. Our proof works by induction. We assume that players are happy with their strategies $\hat{P}_i$ if all open edges of $T$ have cost 0, all open edges outside $T$ have cost $\hat{c}_e$ for every player, and the closed edges $e \in T$ are shared as determined by $\hat{c}_e$.

The algorithm proceeds in a bottom-up fashion. In an iteration, it restores the cost of an open edge $e$ to its original value. It then considers how much each player $i \in N_e(\hat{P})$ is willing to contribute to $e$. The maximum contribution $\Delta_i^e$ is given by the difference in the cheapest costs to buy an $(s,t_i)$-path for $i$ when (1) $e$ has cost 0 and (2) $e$ has cost $c_e$. By induction, for case (1) we can assume that $i$ is happy with $\hat{P}_i$ when $e$ has cost 0. In case (2), suppose $i$ deviates from (parts of) his current path $\hat{P}_i$ and buys auxiliary edges.

Since by induction $i$ is happy with $\hat{P}_i$ when $e$ has cost 0, there is no incentive to deviate from $\hat{P}_i$ between two vertices of $\hat{P}_i$ below $e$. Moreover, clearly, there is no incentive to deviate from $\hat{P}_i$ between two vertices above $e$ (since edges of $T$ above $e$ are assumed to have zero cost). Hence, if in case (2) the path $P_d(i)$ includes $e$, then $P_d(i) = \hat{P}_i$, so $\Delta_i^e = c_e$. Otherwise, player $i$ finds a path that avoids $e$. By the observations so far, $P_d(i)$ can be assumed to follow $\hat{P}_i$ from $t_i$ up to a vertex $v$, then picks a single auxiliary edge $(v,u)$ to node $u$ above $e$, and then follows $\hat{P}_i$ to $s$. We call the vertex $v$ the \emph{deviation vertex} of $P_d(i)$.

Based on $P_d(i)$, the algorithm computes a maximum contribution $\Delta^e_i$ for each player $i \in N_e(\hat{P})$, which $i$ is willing to pay for edge $e$ currently under consideration. If in total these contributions suffice to pay for $e$, then we determine an arbitrary cost sharing of $c_e$ such that each player $i \in N_e$ pays at most $\Delta_i^e$. Thereby, every player $i \in N_e(\hat{P})$ remains happy with his path $\hat{P}_i$, and the inductive assumptions used above remain true. We can label $e$ as closed and proceed to work on the next open edge in the tree $T$.

Otherwise, if the contributions $\Delta_i^e$ do not suffice to pay for $c_e$, then for every $i \in N_e(\hat{P})$ the path $P_d(i)$ avoids $e$ and contains a deviation vertex. The algorithm needs to drop $e$ and change the strategy of every such player. It considers the ``highest'' subset $D$ of deviation vertices, i.e., the unique subset such that $D$ contains exactly one deviation vertex above each terminal $t_i$. The algorithm removes all edges from $T$ that lie below and including $e$ and above any $v \in D$. For each $v \in D$, it then adds one auxiliary edge from the corresponding $P_d(i)$ to $T$. As observed above, these edges connect $v$ to some node $u$ above $e$, and thereby yield a new feasible tree $T$ in $\hat{G}$. 

Since $P_d(i)$ is a best response for player $i$, we assign $i$ to pay for the cost of the auxiliary edge. After this update, $i$ is clearly happy with $\hat{P}_i$. Moreover, every other player $j \in N_e(\hat{P})$ that now uses the auxiliary edge paid by player $i$ is happy with his new strategy $\hat{P}_j$. The auxiliary edge has cost zero for player $j$, and the path from $t_i$ to the deviation vertex $v$ has not changed. By induction $j$ was happy with this path after we finished paying for the last edge below $v$. Thus, we can label all auxiliary edges added to $T$ as closed and proceed to work on the next open edge in the tree $T$. 

By induction, this proves that the algorithm computes a cost sharing that induces a separable protocol with the final tree $T'$ being a Nash equilibrium in $\hat{G}$. Moreover, if we change the tree during the iteration for edge $e$, it is straightforward to verify that the total cost of the tree strictly decreases. 
\end{proof}

\begin{proof}[Proof of Theorem~\ref{thm:singleSource}]
The previous lemma shows that the algorithm computes a cost sharing of a tree $\hat{T}$ in $\hat{G}$, such that every player is happy with the path $\hat{P}_i$ and every auxiliary edge in $\hat{T}$ is paid for completely by a single player. We now transform $\hat{P}$ into $P'$ by replacing each auxiliary edge $e = (u,v) \in \hat{P}_i$ by the corresponding shortest path $P(u,v)$ in $G$. We denote by $E_i$ the set of edges introduced in the shortest paths for auxiliary edges in $\hat{P}_i$. For the total cost of the resulting profile we have that $C(P') \le C(\hat{P}) \le C(P)$, since the sets $E_i$ can overlap with each other or the non-auxiliary edges of $\hat{T}$.

We show that $P'$ is enforceable by transforming the cost sharing constructed in function $\hat{c}$ into separable cost sharing functions as follows. Initially, set $\xi_{i,e}(P') = 0$ for all $e \in E$ and $i \in N$. Then, for each non-auxiliary edge $e \in \hat{T}$ we assign $\xi_{i,e}(P') = \hat{c}_e(i)$ if $e \in \hat{P}_i$ and $\xi_{i,e}(P') = 0$ otherwise. Finally, number players arbitrarily from 1 to $n$ and proceed in that order. For player $i$, consider the edges in $E_i$. For every $e \in E_i$, if $\sum_{j < i} \xi_{j,e}(P') = 0$, then set $\xi_{i,e}(P') = c_e$. 

This yields a budget-balanced assignment for state $P'$. As usual, if a player $i$ deviates in $P'$ from $P'_i$ to $P''_i$, we can assume player $i$ is assigned to pay the full cost $c_e$ for every edge $e \in P''_i \setminus P'_i$. To show that there is no profitable deviation from $P'$, we first consider a thought experiment, where every edge in $E_i$ comes as a separate edge bought by player $i$. Then, clearly $P'$ is enforceable -- the cost of $P'_i$ with $\xi$ is exactly the same as the cost of $\hat{P}$ with $\hat{c}$ in $\hat{G}$. Moreover, any deviation $P''_i$ can be interpreted as an $(s,t_i)$-path in $\hat{G}$ by replacing all subpaths consisting of non-auxiliary edges in $P''$ by the corresponding auxiliary edge of $\hat{G}$. As such, the cost of $P''_i$ is exactly the same as the cost of the corresponding deviation in $\hat{G}$. Now, there is not a separate copy for every edge in $E_i$. The set $E_i$ can overlap with other sets $E_j$ and/or non-auxiliary edges. Then player $i$ might not need to pay the full cost on some $e \in E_i$. Note, however, every edge for which player $i$ pays less than $c_e$ is present in $P'_i$ as well. Hence, $P''_i$ cannot improve over $P'_i$ due to this property.
\end{proof}

The result continues to hold for various generalizations. For example, we can immediately apply the arguments in directed graphs, where every player $i$ seeks to establish a directed path between $t_i$ and $s$. Moreover, the proof can also be applied readily for a group-connection game, where each player wants to establish a directed path to $s$ from \emph{at least one node of a set $V_i \subset V$}. For this game, we simply add a separate super-terminal $t_i$ for every player $i$ and draw a directed edge of cost 0 from $t_i$ to every node in $V_i$.

\begin{corollary}
Let $P$ be a strategy profile for a single-source group-connection game in directed graphs with fixed costs. There is an enforceable profile $P'$ with $C(P')\leq C(P)$ that can be computed by Algorithm~\ref{alg:singleSourceTransform} in polynomial time.
\end{corollary}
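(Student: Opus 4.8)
The plan is to reduce the single-source group-connection game to the ordinary single-source connection game of Theorem~\ref{thm:singleSource} (in its directed-graph version, whose proof applies verbatim as noted above), and then check that the reduction preserves costs, equilibria, enforceability, and polynomial running time. Concretely, given the directed graph $G=(V,E)$, the source $s$, the fixed costs $c$, and the sets $V_i\subset V$, I would form an auxiliary graph $G^+$ by adjoining for each player $i$ a fresh vertex $t_i\notin V$ together with directed edges $(t_i,v)$ of cost $0$ for every $v\in V_i$, and let player $i$ seek a directed $(t_i,s)$-path. Since $t_i$ has no incoming edges, no other player can route through it, and every $(t_i,s)$-path in $G^+$ is forced to consist of a single edge $(t_i,v)$ with $v\in V_i$ followed by a directed $(v,s)$-path in $G$; conversely every group-connection strategy of player $i$ (a choice of $v\in V_i$ and a directed $(v,s)$-path in $G$) extends uniquely in this way. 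This yields a cost-preserving bijection between profiles of the group-connection game and profiles of the single-source connection game on $G^+$: if $P^+$ is the profile corresponding to $P$ (obtained by prepending the forced zero-cost edge to each $P_i$), then $C(P^+)=C(P)$ because the new edges carry cost $0$.

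Next I would run Algorithm~\ref{alg:singleSourceTransform} on the instance $(N,G^+,(t_1,\dots,t_n),s,c)$ with input profile $P^+$. By Theorem~\ref{thm:singleSource} it outputs in polynomial time an enforceable profile $(P^+)'$ on $G^+$ with $C((P^+)')\le C(P^+)$. Translating $(P^+)'$ back along the bijection gives a profile $P'$ of the group-connection game, and $C(P')=C((P^+)')\le C(P^+)=C(P)$. Moreover $|V(G^+)|=|V|+n$ and $|E(G^+)|=|E|+\sum_i|V_i|$ are polynomial in the input size, so building $G^+$ and translating back add only polynomial overhead.

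The remaining point is to transfer enforceability. Let $\Xi^+$ be a separable cost sharing protocol on $G^+$ that induces $(P^+)'$ as a PNE. By budget balance the cost shares on every auxiliary edge $(t_i,v)$ are $0$ in every profile, so restricting $\Xi^+$ to the edges of $G$ gives budget-balanced, separable cost-share functions for the group-connection game (separability is inherited since, under the bijection, the player set $N_e$ on any $G$-edge $e$ is unchanged). Any unilateral deviation of player $i$ in the group-connection game is exactly a unilateral deviation in the game on $G^+$ — the first edge is forced into $\{(t_i,v):v\in V_i\}$ and costs nothing — and it has the same private cost in both games, so no player can improve and $P'$ is a PNE. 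The only thing requiring care, rather than a real obstacle, is exactly this bookkeeping: verifying that the bijection on profiles and on unilateral deviations is tight, that the forced zero-cost first edge never affects incentives, and that the super-terminals create no spurious routes; once that is checked, the corollary is immediate from Theorem~\ref{thm:singleSource}.
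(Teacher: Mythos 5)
Your proposal is correct and is essentially the paper's own argument: the paper proves the corollary by exactly this reduction, adding a super-terminal $t_i$ per player with zero-cost directed edges into $V_i$ and invoking the directed single-source version of Theorem~\ref{thm:singleSource}. Your additional bookkeeping (the bijection on profiles and deviations, zero shares on auxiliary edges by budget balance, polynomial blow-up) is all sound and merely makes explicit what the paper leaves implicit.
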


\section{Connection Games and Graph Structure}
\label{sec:nSepa}
%
%
In this section, we consider connection games played in undirected graphs $G=(V,E)$ with player-specific source-terminal pairs. Each player $i \in N$ has a source-terminal-pair $(s_i,t_i)$. Note that we can assume w.l.o.g.\ that $(G,(s_1,t_1),\ldots,(s_n,t_n))$ is \textit{irredundant}, meaning that each edge and each vertex of $G$ is contained in at least one $(s_i,t_i)$-path for some player $i \in N$ (nodes and edges which are not used by any player can easily be recognized (and then deleted) by Algorithm~\nameref{irredundant} at the end of the section; adapted from Algorithm 1 in~\cite{chen2016}). \\
Harks et al.~\cite{harks2017} characterized enforceability for the special case with $d_{i,e}=0$ for all $i \in N, e \in E$ via an LP. We can directly adapt this characterization as follows:
\begin{align*}
\text{LP($P$)} \ \ &\max &\sum_{i \in N, e \in P_i}\xi_{i,e} \\
& \text{s.t.: } & \sum_{i \in N_e(P)} \xi_{i,e} &\leq c_e & \forall e \in E \text{ with } N_e(P)\neq \emptyset\\
& & \sum_{e \in P_i \setminus P_i'} {\left(\xi_{i,e}+d_{i,e}\right)} &\leq \sum_{e \in P_i' \setminus P_i} {\left(c_e+d_{i,e}\right)}  & \forall P_i' \in \mathcal P_i \ \forall i \in N \tag{NE} \label{NE}\\
& &\xi_{i,e} &\geq 0 & \forall e \in P_i \ \forall i \in N 
\end{align*}

\begin{theorem}
The strategy profile $P=(P_1,\ldots,P_n)$ is enforceable if and only if there is an optimal solution $(\xi_{i,e})_{i \in N, e \in P_i}$ for LP($P$) with
\[
\sum_{i\in N_e(P)}{\xi_{i,e}}=c_e \quad \forall e \in E \text{ with } N_e(P)\neq \emptyset. \tag{BB}\label{BB}
\]
\end{theorem}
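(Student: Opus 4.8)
The plan is to prove the two implications separately, each time exploiting the standard dictionary between cost-sharing protocols and the feasible region of LP($P$). For the direction ``enforceable $\Rightarrow$ there is an optimal LP solution satisfying \eqref{BB}'', I would start from a separable protocol $\Xi$ that makes $P$ a PNE and set $\xi_{i,e}:=\xi_{i,e}(P)$ for all $i\in N$, $e\in P_i$. Nonnegativity is immediate; budget balance of $\Xi$ gives $\sum_{i\in N_e(P)}\xi_{i,e}=c_e$ on every used edge, which is both the capacity constraint of LP($P$) and \eqref{BB}. For the \eqref{NE} constraints, I would fix $i\in N$ and $P_i'\in\cP_i$ and split $P_i\cup P_i'$ into $P_i\cap P_i'$, $P_i\setminus P_i'$, and $P_i'\setminus P_i$. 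Under the unilateral move from $P$ to $(P_i',P_{-i})$ the user set $N_e(\cdot)$ is unchanged on each $e\in P_i\cap P_i'$, so separability forces $\xi_{i,e}(P_i',P_{-i})=\xi_{i,e}$ there, while on each $e\in P_i'\setminus P_i$ budget balance gives $\xi_{i,e}(P_i',P_{-i})\le c_e$. Expanding $\xi_i(P)\le\xi_i(P_i',P_{-i})$, cancelling the now-identical $P_i\cap P_i'$ terms and replacing the new-edge shares by the bound $c_e$ yields exactly \eqref{NE}. Optimality then comes for free: by \eqref{BB} the objective value of $(\xi_{i,e})$ equals $\sum_{e:N_e(P)\neq\emptyset}c_e$, which the capacity constraints show is an upper bound on the objective of any feasible solution.

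For the converse, given an optimal LP solution $(\xi_{i,e})$ satisfying \eqref{BB}, I would define a protocol $\Xi$ whose share on each edge $e$ depends only on the set of its users: if that set is exactly $N_e(P)$, charge $\xi_{i,e}$ to each $i$; if it contains some player outside $N_e(P)$, let one such player pay all of $c_e$ and everyone else nothing; and if it is a proper subset of $N_e(P)$, use any budget-balanced split. This $\Xi$ is separable and budget balanced by construction --- the verification is the same routine check one uses for the generic reduction from arbitrary-sharing equilibria to separable protocols. To see that $P$ is a PNE, I would check that when $i$ deviates from $P_i$ to $P_i'$ her share stays $\xi_{i,e}$ on $P_i\cap P_i'$, drops to $0$ on $P_i\setminus P_i'$, and equals $c_e$ on each $e\in P_i'\setminus P_i$ (where she is a new user); hence her private cost changes by $\sum_{e\in P_i'\setminus P_i}(c_e+d_{i,e})-\sum_{e\in P_i\setminus P_i'}(\xi_{i,e}+d_{i,e})\ge 0$, the inequality being precisely \eqref{NE}. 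Since \eqref{BB} makes $P$ budget balanced, $P$ is enforceable.

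The whole argument is largely bookkeeping, and I expect the only delicate point to be the forward direction: it is \emph{separability}, not merely budget balance, that pins down the shares on the overlap $P_i\cap P_i'$, whereas on the newly entered edges only the crude estimate $\xi_{i,e}\le c_e$ is available --- and, conveniently, that is all one needs. I also expect the slickest way to handle the optimality clause is the observation used above: the LP objective is trivially capped by $\sum_{e:N_e(P)\neq\emptyset}c_e$, so ``there exists an optimal solution satisfying \eqref{BB}'' is equivalent to ``there exists a feasible solution satisfying \eqref{BB}'', and that is exactly what the two directions respectively produce and consume.
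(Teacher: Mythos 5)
Your proof is correct and matches the paper's approach: the protocol you build for the ``if'' direction (shares $\xi_{i,e}$ when the user set equals $N_e(P)$, full cost $c_e$ charged to one new entrant otherwise, arbitrary budget-balanced split on proper subsets) is exactly the one displayed in the paper right after the theorem, and your ``only if'' argument via separability on $P_i\cap P_i'$ plus the bound $\xi_{i,e}\le c_e$ on new edges is the standard derivation the paper defers to the cited characterization of Harks et al. Your observation that \eqref{BB} forces optimality automatically, so feasibility plus \eqref{BB} suffices, is also correct and handles the optimality clause cleanly.
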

Given an optimal solution $(\xi_{i,e})_{i \in N, e \in P_i}$ for LP($P$) with the property (\ref{BB}), the profile $P$ becomes a PNE in the game induced by $\xi$, which assigns for each $i \in N$ and $e \in E$ and each strategy profile $P'=(P_1', \ldots, P_n')$ the following cost shares (these cost shares resemble those introduced in \cite{HarksF13}): 
\[
\xi_{i,e}(P') = \begin{cases} \xi_{i,e}, & \text{if } i \in S_e(P')=S_e(P),\\
c_e, & \text{if } i \in (S_e(P')\setminus S_e(P)) \text{ and } i=\min (S_e(P')\setminus S_e(P)), \\
c_e, & \text{if } i \in S_e(P') \subsetneq S_e(P) \text{ and } i=\min S_e(P'), \\
0, & \text{else.}
\end{cases}
\] 

We now introduce a subclass of generalized series-parallel graphs for which we design a polynomial time black-box reduction that computes, for a given strategy profile $P$, an enforceable strategy profile with smaller cost. 
\begin{definition}[$n$-series-parallel graph]
An irredundant graph $(G, (s_1,t_1),\ldots,(s_n,t_n))$ is \textit{$n$-series-parallel} if, for all $i\in N$, the subgraph $G_i$ (induced by $\mathcal P_i$) is created by a sequence of series and/or parallel operations starting from the edge $s_i-t_i$.
 For an edge $e=u-v$, a series operation replaces it by a new vertex $w$ and two edges $u-w, w-v$;
A parallel operation adds to $e=u-v$ a parallel edge $e'=u-v$.

\end{definition}

The following theorem summarizes our results for $n$-series-parallel graphs. 

\begin{theorem}\label{theo_sepa}
If $(G, (s_1,t_1),\ldots,(s_n,t_n))$ is $n$-series-parallel, the following holds:
\begin{enumerate}
	\item[(1)] Given an arbitrary strategy profile $P$, an enforceable strategy profile $P'$ with cost $C(P')\leq C(P)$, and corresponding cost share functions $\xi$, can be computed in polynomial time. 
	\item[(2)] For all cost functions $c,d$, every optimal strategy profile of $(G,(s_1,t_1), \ldots, (s_n,t_n),c,d)$ is enforceable.
	\item[(3)] For all edge costs $c$, an optimal Steiner forest of $(G,(s_1,t_1), \ldots, (s_n,t_n),c)$ can be computed in polynomial time.
\end{enumerate}
\end{theorem}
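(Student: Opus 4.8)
The plan is to establish (1) and (2) by one structural argument and to derive (3) from a known bounded-treewidth result. For (1), I would exploit that for each player $i$ the subgraph $G_i$ of all $(s_i,t_i)$-paths is two-terminal series-parallel, hence carries a series-parallel decomposition tree in which a strategy of $i$ is just a choice of branch at every parallel node. Starting from the input profile $P$ --- first normalised by deleting detours, which never increases $C$ and makes every $P_i$ a simple path --- the algorithm maintains a current profile together with cost shares built bottom-up along the decomposition (in the spirit of Algorithm~\ref{alg:singleSourceTransform}), using the profile-independent virtual costs $c_e + d_{i,e}$ to estimate deviations (in the spirit of Algorithm~\ref{alg:matroidTransform}). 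For every resource $e$ used by the current profile it checks a virtual form of the budget-balance inequality in which each player's willingness to pay for $e$ is underestimated via the virtual costs; passing this (sufficient) test on every used $e$ certifies, via the LP-characterisation stated just above, that the current profile is enforceable, and whenever it fails on some $e$ the algorithm re-routes a suitable set of players off the parallel block of $G_i$ containing $e$ onto a cheaper parallel branch. The accompanying shares $\xi$ --- say, of the lexicographic form displayed before the definition of $n$-series-parallel graphs --- then turn the final profile $P'$ into a PNE.

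Correctness and efficiency rest on two claims. First, each such re-route strictly decreases the total cost $C$: failure of the budget-balance test inside a series-parallel block means every affected player has, within that block, an alternative strictly cheaper than the share she would otherwise carry, and --- crucially --- since parallel branches are edge-disjoint and every $(s_i,t_i)$-path through the block lies entirely in one branch, these individual alternatives can be realised \emph{simultaneously} as an honest profile rather than a fractional object. Second, the number of re-routes is polynomial: with the virtual-cost device each re-route of a player strictly lowers the virtual value of the branch she abandons, and virtual values take at most $|E|$ distinct values, exactly as in the proof of the matroid reduction; the per-iteration work is polynomial too, since by the series-parallel structure the relevant deviation constraints reduce to a shortest-path computation over the decomposition tree. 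Together these give $C(P') \le C(P)$ and polynomial running time, proving (1).

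Part (2) is then immediate: if $P$ is an optimal strategy profile, no re-route can strictly decrease $C$, so the budget-balance test never fails, hence (BB) holds at the optimum of the LP and $P$ is enforceable; equivalently, the algorithm of (1) run on an optimal $P$ performs only cost-decreasing moves and therefore returns $P$ unchanged together with certifying cost shares. This is exactly where the $n$-series-parallel hypothesis is indispensable: already for two players on general undirected graphs an optimal Steiner forest need not be enforceable~\cite{harks2017}, and the counterexample at the end of this section rules out a black-box reduction even for generalised series-parallel graphs, so the first claim above cannot survive in those settings. For part (3), every $n$-series-parallel graph is generalised series-parallel and hence of treewidth at most $2$, so the Steiner forest algorithm of Bateni, Hajiaghayi and Marx~\cite{bateni2011} computes an optimal Steiner forest for arbitrary edge costs $c$ in polynomial time; alternatively, a direct dynamic program over the series-parallel decomposition does the job.

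The main obstacle I anticipate is the first of the two claims above: proving that non-enforceability of the current profile genuinely exposes a cost-improving re-route that can be carried out inside the series-parallel structure, so that it yields a bona fide strategy profile. Coupled with this is maintaining enforceability of the partially built cost shares as an invariant under the re-routes, so that it can simply be read off at termination; the virtual-cost accounting for the polynomial bound, and checking that only polynomially many deviation constraints need be considered, are the remaining and more routine points.
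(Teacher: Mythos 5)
The core of your argument --- that failure of a virtual budget-balance test exposes a simultaneous, cost-improving re-route --- is precisely the part that does not go through, and you rightly flag it as the main obstacle. The virtual-cost device from the matroid reduction works there because a deviation is a single element swap, so a player's willingness to pay for $e$ is bounded by the profile-independent quantity $\pi_i^f$ of the replacement element. In a connection game a deviation replaces a subpath of $P_i$ by an alternative path, and the amount player $i$ is willing to contribute to a single edge $e$ depends on the cost shares she carries on the \emph{other} edges of the abandoned subpath, i.e.\ on the protocol being constructed, not on any profile-independent virtual value. The paper therefore works with the LP characterization instead: it solves LP($P$), uses LP-optimality to show that every user of an underpaid edge has a \emph{tight} alternative (otherwise the objective could be increased), reroutes all such players along smallest tight alternatives so that private costs are preserved and total cost strictly drops, and then --- this is the delicate part --- proves by an exchange argument on the LP solution (tracing chains of players $(i,f),(j,g),(p,h),\ldots$ and shifting shares along the chain) that tight alternatives continue to exist in every later phase. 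Your termination argument also does not transfer: there is no monotone ``virtual value of the abandoned branch''; the paper bounds the number of phases by $|P|$ because each phase permanently retires at least one underpaid edge and players never return to substituted edges. Likewise, the polynomial bound on the number of deviation constraints is obtained not from a decomposition tree but by observing that the cheapest alternatives $\mathcal A_i$ are pairwise edge-disjoint, hence at most $|E|$ many.

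Two smaller points. For part (3) you assert that an $n$-series-parallel graph is generalised series-parallel; by definition only each $G_i$ individually is series-parallel, and showing that the \emph{union} $G=\bigcup_i G_i$ has treewidth at most $2$ requires the paper's inductive construction that attaches the players' subgraphs one by one via series, parallel and add operations, using irredundancy to rule out alternatives straddling the attachment points. Part (2) as you state it (optimality plus strict cost decrease of the reduction implies enforceability) matches the paper, but it inherits the gap in part (1).
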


To prove \autoref{theo_sepa}, we need to introduce some notation. Let $(\xi_{i,e})_{i \in N, e \in P_i}$ be an optimal solution for LP($P$). 
For $i \in N$ and $f \in P_i$, we consider all paths $P_i' \in \mathcal P_i$ with $f \notin P_i'$, $P_i \cup P_i'$ contains a unique cycle $C(P_i')$ and $\sum_{e \in P_i \setminus P_i'}{(\xi_{i,e}+d_{i,e})} = \sum_{e \in P_i' \setminus P_i}{(c_e+d_{i,e})}$. Among all these paths, choose one for which the number of edges in $C(P_i')\cap P_i$ is minimal. The corresponding path $A_{i,f}:=C(P_i')\cap P_i'$ is called a \textit{smallest tight alternative of player $i$ for $f$}. 
If we say that player $i$ substitutes $f$ by using $A_{i,f}$, we mean that the current path $P_i$ of player $i$ is changed by using $A_{i,f}$ instead of the subpath $C(P_i')\cap P_i$ (which contains $f$).
Figure~\ref{alternatives} illustrates the described concepts. 

Note that $A_{i,f}$ is also smallest in the sense that every other (tight) alternative for $f$ substitutes a superset of the edges substituted by $A_{i,f}$.

\begin{figure}[ht] \centering \psset{unit=0.85cm}
 \begin{pspicture}(-0.5,-1)(12.5,1.5) 
\uput[0](-0.8,-0.7){\footnotesize\textbf{(a)} Example for $P_i$ (thick) and all alternative paths with tight inequality in LP($P$).}
\Knoten{0}{0}{s}\nput{180}{s}{$s_i$}
\Knoten{1}{0}{v1}
\Knoten{2}{0}{v2}
\Knoten{3}{0}{v3}
\Knoten{4}{0}{v4}
\Knoten{5}{0}{v5}
\Knoten{6}{0}{v6}
\Knoten{7}{0}{v7}
\Knoten{8}{0}{v8}
\Knoten{9}{0}{v9}
\Knoten{10}{0}{v10}
\Knoten{11}{0}{v11}
\Knoten{4.5}{0}{v12}
\Knoten{10.5}{0}{v13}
\Knoten{3.5}{0.6}{v15}
\Knoten{4.5}{0.8}{v16}
\Knoten{5.5}{0.6}{v17}
\Knoten{9}{0.5}{v18}
\Knoten{12}{0}{t}\nput{0}{t}{$t_i$}
\ncline[linewidth=0.07]{-}{s}{t}
\Kante{v9}{v10}{f}
\ncarc[arcangle=85]{-}{v1}{v3}
\ncarc[arcangle=10]{-}{v3}{v15}
\ncarc[arcangle=10]{-}{v15}{v16}
\ncarc[arcangle=10]{-}{v16}{v17}
\ncarc[arcangle=10]{-}{v17}{v6}
\ncarc[arcangle=80]{-}{v15}{v17}
\ncarc[arcangle=85]{-}{v4}{v5}
\ncarc[arcangle=100]{-}{v7}{t}
\ncarc[arcangle=80]{-}{v8}{t}
\ncarc[arcangle=25]{-}{v8}{v18}
\ncarc[arcangle=25]{-}{v18}{v10}
\ncarc[arcangle=85]{-}{v10}{v11}
\end{pspicture}

  \begin{pspicture}(-0.5,-1)(12.5,1.5) 
\uput[0](-0.8,-0.7){\footnotesize\textbf{(b)} Dashed path $P_i'$ with $f \notin P_i'$, but no unique cycle with $P_i$.}
\Knoten{0}{0}{s}\nput{180}{s}{$s_i$}
\Knoten{1}{0}{v1}
\Knoten{2}{0}{v2}
\Knoten{3}{0}{v3}
\Knoten{4}{0}{v4}
\Knoten{5}{0}{v5}
\Knoten{6}{0}{v6}
\Knoten{7}{0}{v7}
\Knoten{8}{0}{v8}
\Knoten{9}{0}{v9}
\Knoten{10}{0}{v10}
\Knoten{11}{0}{v11}
\Knoten{4.5}{0}{v12}
\Knoten{10.5}{0}{v13}
\Knoten{3.5}{0.6}{v15}
\Knoten{4.5}{0.8}{v16}
\Knoten{5.5}{0.6}{v17}
\Knoten{9}{0.5}{v18}
\Knoten{12}{0}{t}\nput{0}{t}{$t_i$}
\ncline[linestyle=dashed]{-}{s}{v1}
\ncline[linestyle=dashed]{-}{v1}{v2}
\ncline[linestyle=dashed]{-}{v2}{v3}
\ncline[linestyle=dashed]{-}{v6}{v7}
\ncline[linestyle=dashed]{-}{v7}{v8}
\ncline{-}{v3}{v6}
\ncline{-}{v8}{t}
\Kante{v9}{v10}{f}
\ncarc[arcangle=85]{-}{v1}{v3}
\ncarc[arcangle=10,linestyle=dashed]{-}{v3}{v15}
\ncarc[arcangle=10,linestyle=dashed]{-}{v15}{v16}
\ncarc[arcangle=10,linestyle=dashed]{-}{v16}{v17}
\ncarc[arcangle=10,linestyle=dashed]{-}{v17}{v6}
\ncarc[arcangle=80]{-}{v15}{v17}
\ncarc[arcangle=85]{-}{v4}{v5}
\ncarc[arcangle=100]{-}{v7}{t}
\ncarc[arcangle=80,linestyle=dashed]{-}{v8}{t}
\ncarc[arcangle=25]{-}{v8}{v18}
\ncarc[arcangle=25]{-}{v18}{v10}
\ncarc[arcangle=85]{-}{v10}{v11}
\end{pspicture}

 \begin{pspicture}(-0.5,-1)(12.5,1.5) 
\uput[0](-0.8,-0.7){\footnotesize\textbf{(c)} Dashed path $P_i'$ with $f \notin P_i'$, unique cycle $C(P_i')$ with $P_i$, but not smallest for $f$.}
\Knoten{0}{0}{s}\nput{180}{s}{$s_i$}
\Knoten{1}{0}{v1}
\Knoten{2}{0}{v2}
\Knoten{3}{0}{v3}
\Knoten{4}{0}{v4}
\Knoten{5}{0}{v5}
\Knoten{6}{0}{v6}
\Knoten{7}{0}{v7}
\Knoten{8}{0}{v8}
\Knoten{9}{0}{v9}
\Knoten{10}{0}{v10}
\Knoten{11}{0}{v11}
\Knoten{4.5}{0}{v12}
\Knoten{10.5}{0}{v13}
\Knoten{3.5}{0.6}{v15}
\Knoten{4.5}{0.8}{v16}
\Knoten{5.5}{0.6}{v17}
\Knoten{9}{0.5}{v18}
\Knoten{12}{0}{t}\nput{0}{t}{$t_i$}
\ncline[linestyle=dashed]{-}{s}{v1}
\ncline[linestyle=dashed]{-}{v1}{v2}
\ncline[linestyle=dashed]{-}{v2}{v3}
\ncline[linestyle=dashed]{-}{v6}{v7}
\ncline[linestyle=dashed]{-}{v7}{v8}
\ncline[linestyle=dashed]{-}{v3}{v6}
\ncline{-}{v8}{t}
\Kante{v9}{v10}{f}
\ncarc[arcangle=85]{-}{v1}{v3}
\ncarc[arcangle=10]{-}{v3}{v15}
\ncarc[arcangle=10]{-}{v15}{v16}
\ncarc[arcangle=10]{-}{v16}{v17}
\ncarc[arcangle=10]{-}{v17}{v6}
\ncarc[arcangle=80]{-}{v15}{v17}
\ncarc[arcangle=85]{-}{v4}{v5}
\ncarc[arcangle=100]{-}{v7}{t}
\ncarc[arcangle=80,linestyle=dashed]{-}{v8}{t}
\ncarc[arcangle=25]{-}{v8}{v18}
\ncarc[arcangle=25]{-}{v18}{v10}
\ncarc[arcangle=85]{-}{v10}{v11}
\end{pspicture}

 \begin{pspicture}(-0.5,-1)(12.5,1.5) 
\uput[0](-0.8,-0.7){\footnotesize\textbf{(d)} Substituting $f$ by using smallest tight alternative $A_{i,f}$.}
\Knoten{0}{0}{s}\nput{180}{s}{$s_i$}
\Knoten{1}{0}{v1}
\Knoten{2}{0}{v2}
\Knoten{3}{0}{v3}
\Knoten{4}{0}{v4}
\Knoten{5}{0}{v5}
\Knoten{6}{0}{v6}
\Knoten{7}{0}{v7}
\Knoten{8}{0}{v8}
\Knoten{9}{0}{v9}
\Knoten{10}{0}{v10}
\Knoten{11}{0}{v11}
\Knoten{4.5}{0}{v12}
\Knoten{10.5}{0}{v13}
\Knoten{3.5}{0.6}{v15}
\Knoten{4.5}{0.8}{v16}
\Knoten{5.5}{0.6}{v17}
\Knoten{9}{0.5}{v18}
\Knoten{12}{0}{t}\nput{0}{t}{$t_i$}
\ncline[linewidth=0.07]{-}{s}{v1}
\ncline[linewidth=0.07]{-}{v1}{v2}
\ncline[linewidth=0.07]{-}{v2}{v3}
\ncline[linewidth=0.07]{-}{v6}{v7}
\ncline[linewidth=0.07]{-}{v7}{v8}
\ncline[linewidth=0.07]{-}{v3}{v6}
\ncline{-}{v8}{v10}
\ncline[linewidth=0.07]{-}{v10}{t}
\Kante{v9}{v10}{f}
\ncarc[arcangle=85]{-}{v1}{v3}
\ncarc[arcangle=10]{-}{v3}{v15}
\ncarc[arcangle=10]{-}{v15}{v16}
\ncarc[arcangle=10]{-}{v16}{v17}
\ncarc[arcangle=10]{-}{v17}{v6}
\ncarc[arcangle=80]{-}{v15}{v17}
\ncarc[arcangle=85]{-}{v4}{v5}
\ncarc[arcangle=100]{-}{v7}{t}
\ncarc[arcangle=80]{-}{v8}{t}\nbput[npos=0.45]{$A_{i,f}$}
\ncarc[arcangle=25,linewidth=0.07]{-}{v8}{v18}
\ncarc[arcangle=25,linewidth=0.07]{-}{v18}{v10}
\ncarc[arcangle=85]{-}{v10}{v11}
\end{pspicture}
\caption{Illustration of the introduced concepts.}\label{alternatives}
\end{figure}
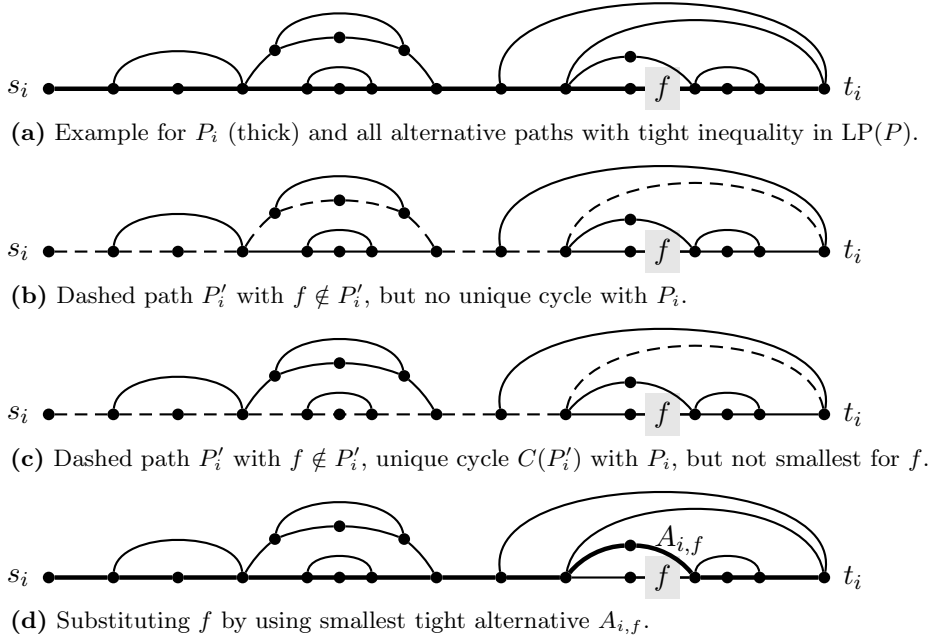

We are now able to prove \autoref{theo_sepa}.

\begin{proof}[Proof of \autoref{theo_sepa}]
We first describe how to compute, given an arbitrary strategy profile $P=(P_1,\ldots,P_n)$, an enforceable strategy profile with cost at most $C(P)$. 

Assume that $P$ is not enforceable (otherwise there is nothing to do). 
Let $(\xi_{i,e})_{i \in N, e \in P_i}$ be an optimal solution for LP($P$). 
In the following, we denote the variables  $(\xi_{i,e})_{i \in N, e \in P_i}$ as \textit{cost shares}, although they do not correspond to a budget-balanced cost sharing protocol (since $P$ is not enforceable).
There is at least one edge $f$ which is not completely paid, i.e. for which $\sum_{i \in N_f(P)} \xi_{i,f} < c_f$ holds. The optimality of the cost shares $(\xi_{i,e})_{i \in N, e \in P_i}$ for LP($P$) implies that each player $i \in N_f(P)$, i.e. each user of $f$, has an alternative path $P_i'$ with $f \notin P_i'$, for which equality holds in the corresponding LP($P$)-inequality (otherwise increasing $\xi_{i,f}$ by a small amount, while all other cost shares remain unchanged, would yield a feasible LP-solution with higher objective function value). Using the notation introduced above, each user $i$ of $f$ has a smallest tight alternative $A_{i,f}$ for $f$. 
Furthermore, if $P_i$ contains more than one edge which is not completely paid, there is a combination of smallest tight alternatives so that all edges which are not completely paid are substituted (see Figure~\ref{combination}, where $f,g,h$ are not completely paid and we substitute all these edges by combining $A_{i,g}$ and $A_{i,h}$). 

\begin{figure}[ht] \centering \psset{unit=0.8cm}
 \begin{pspicture}(-0.5,-0.2)(12.5,1.5) 
\Knoten{0}{0}{s}\nput{180}{s}{$s_i$}
\Knoten{1}{0}{v1}
\Knoten{2}{0}{v2}
\Knoten{3}{0}{v3}
\Knoten{4}{0}{v4}
\Knoten{5}{0}{v5}
\Knoten{6}{0}{v6}
\Knoten{7}{0}{v7}
\Knoten{8}{0}{v8}
\Knoten{9}{0}{v9}
\Knoten{10}{0}{v10}
\Knoten{11}{0}{v11}
\Knoten{4.5}{0}{v12}
\Knoten{10.5}{0}{v13}
\Knoten{3.5}{0.6}{v15}
\Knoten{4.5}{0.8}{v16}
\Knoten{5.5}{0.6}{v17}
\Knoten{9}{0.5}{v18}
\Knoten{12}{0}{t}\nput{0}{t}{$t_i$}
\ncline[linewidth=0.07]{-}{s}{v1}
\ncline{-}{v2}{v3}
\ncline[linewidth=0.07]{-}{v6}{v7}
\ncline[linewidth=0.07]{-}{v7}{v8}
\ncline[linewidth=0.07]{-}{v3}{v6}
\ncline{-}{v8}{v10}
\ncline{-}{v10}{t}
\Kante{v9}{v10}{f}
\Kante{v1}{v2}{g}
\Kante{v11}{t}{h}
\ncarc[arcangle=85,linewidth=0.07]{-}{v1}{v3}
\ncarc[arcangle=10]{-}{v3}{v15}
\ncarc[arcangle=10]{-}{v15}{v16}
\ncarc[arcangle=10]{-}{v16}{v17}
\ncarc[arcangle=10]{-}{v17}{v6}
\ncarc[arcangle=80]{-}{v15}{v17}
\ncarc[arcangle=85]{-}{v4}{v5}
\ncarc[arcangle=100]{-}{v7}{t}
\ncarc[arcangle=80,linewidth=0.07]{-}{v8}{t}
\ncarc[arcangle=25]{-}{v8}{v18}
\ncarc[arcangle=25]{-}{v18}{v10}
\ncarc[arcangle=85]{-}{v10}{v11}
\end{pspicture}
\caption{Illustration of combining smallest tight alternatives.}\label{combination}
\end{figure}
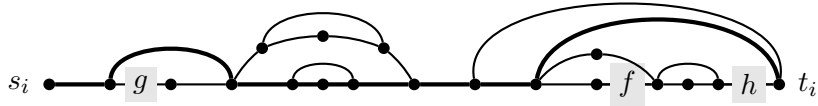

We now consider the strategy profile $P'=(P_1',\ldots,P_n')$ which results from $P$ if all players with unpaid edges in their paths substitute all these edges by a combination of smallest tight alternatives.
Furthermore we define cost shares (again not necessarily budget-balanced) for $P'$ as follows: For each player $i$ and each edge $e \in P_i'$: 
\[
\xi_{i,e}(P')=\begin{cases} \xi_{i,e}, & \text{for } e \in P_i' \cap P_i, \\
c_e, & \text{for } e \in P_i' \setminus P_i.
\end{cases}
\]
Note that the private cost of player $i$ under $P$ equals the private cost of $i$ under $P'$ since the players use tight alternatives. Furthermore note that $\sum_{i \in N_e(P')} \xi_{i,e}(P') > c_e$ is possible (for example if there are two players which did not use an edge $e$ with $c_e>0$ in their paths under $P$, but use it in $P'$ and therefore both pay $c_e$). 
If $\sum_{i \in N_e(P')} \xi_{i,e}(P') \geq c_e$ holds for all edges $e$ with $N_e(P')\neq \emptyset$, we found a strategy profile with the desired properties: $P'$ is cheaper than $P$ since 
\begin{align*}
C(P) &=\sum_{e \in E: N_e(P)\neq \emptyset}{c_e}+\sum_{i \in N}{\sum_{e \in P_i}{d_{i,e}}} \\
&>\sum_{i \in N}\sum_{e \in P_i}{\left(\xi_{i,e}+d_{i,e}\right)} \\
&=\sum_{i \in N}\sum_{e \in P_i'}{\left(\xi_{i,e}(P')+d_{i,e}\right)} \\
&=\sum_{e \in E:  N_e(P')\neq \emptyset}{\sum_{i \in N_e(P')}{\xi_{i,e}(P')}}+\sum_{i \in N}{\sum_{e \in P_i'}{d_{i,e}}} \\ 
&\geq \sum_{e \in E:  N_e(P')\neq \emptyset}{c_e}+\sum_{i \in N}{\sum_{e \in P_i'}{d_{i,e}}}=C(P').
\end{align*}
The strict inequality holds since $P$ is not enforceable, the following equality because the private costs remain unchanged, and the last inequality because of our assumption above. 
Furthermore, $P'$ is enforceable since the cost shares $(\xi_{i,e}(P'))_{i \in N, e \in P_i'}$ induce a feasible solution of LP($P'$) with (\ref{BB}) if we decrease the cost shares for overpaid edges arbitrarily until we reach budget-balance.

Thus assume that there is at least one edge $f$ for which $\sum_{i \in N_f(P')} \xi_{i,f}(P') < c_f$ holds.
First note, for each player $i\in N_f(P')$, that $f \in P_i$ has to hold, since all edges in $P_i' \setminus P_i$ are completely paid (player $i$ pays $c_e$ for $e \in P_i' \setminus P_i$). 
As we will show below, all $i \in N_f(P')$ have a smallest tight alternative $A_{i,f}$ for $f$. We can therefore again update the strategy profile (resulting in $P''$) by letting all players deviate from all nonpaid edges using a combination of smallest tight alternatives. Figure~\ref{2ndstep} illustrates this second phase of deviation, where the edges $r$ and $s$ are now not completely paid. Note that $A_{i,r}$ is not unique in this example, and to use $A_{i,s}$, we need to deviate from $A_{i,h}$ (which player $i$ uses in $P_i'$). 

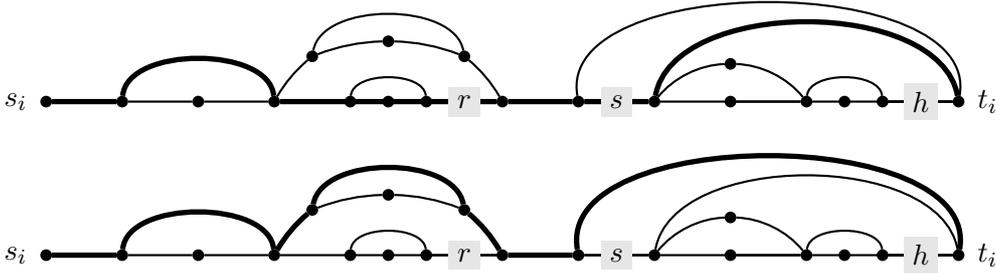
\begin{figure}[ht] \centering \psset{unit=1cm}
 \begin{pspicture}(-0.5,-0.5)(12.5,1.5) 
\Knoten{0}{0}{s}\nput{180}{s}{$s_i$}
\Knoten{1}{0}{v1}
\Knoten{2}{0}{v2}
\Knoten{3}{0}{v3}
\Knoten{4}{0}{v4}
\Knoten{5}{0}{v5}
\Knoten{6}{0}{v6}
\Knoten{7}{0}{v7}
\Knoten{8}{0}{v8}
\Knoten{9}{0}{v9}
\Knoten{10}{0}{v10}
\Knoten{11}{0}{v11}
\Knoten{4.5}{0}{v12}
\Knoten{10.5}{0}{v13}
\Knoten{3.5}{0.6}{v15}
\Knoten{4.5}{0.8}{v16}
\Knoten{5.5}{0.6}{v17}
\Knoten{9}{0.5}{v18}
\Knoten{12}{0}{t}\nput{0}{t}{$t_i$}
\ncline[linewidth=0.07]{-}{s}{v1}
\ncline{-}{v2}{v3}
\ncline[linewidth=0.07]{-}{v6}{v7}
\ncline[linewidth=0.07]{-}{v7}{v8}
\ncline[linewidth=0.07]{-}{v3}{v6}
\ncline{-}{v8}{v10}
\ncline{-}{v10}{t}
\ncline{v9}{v10}
\ncline{v1}{v2}
\ncline{v11}{t}
\Kante{v5}{v6}{r}
\Kante{v7}{v8}{s}
\Kante{v11}{t}{h}
\ncarc[arcangle=85,linewidth=0.07]{-}{v1}{v3}
\ncarc[arcangle=10]{-}{v3}{v15}
\ncarc[arcangle=10]{-}{v15}{v16}
\ncarc[arcangle=10]{-}{v16}{v17}
\ncarc[arcangle=10]{-}{v17}{v6}
\ncarc[arcangle=80]{-}{v15}{v17}
\ncarc[arcangle=85]{-}{v4}{v5}
\ncarc[arcangle=100]{-}{v7}{t}
\ncarc[arcangle=80,linewidth=0.07]{-}{v8}{t}
\ncarc[arcangle=25]{-}{v8}{v18}
\ncarc[arcangle=25]{-}{v18}{v10}
\ncarc[arcangle=85]{-}{v10}{v11}
\end{pspicture}

\begin{pspicture}(-0.5,-0.5)(12.5,1.5) 
\Knoten{0}{0}{s}\nput{180}{s}{$s_i$}
\Knoten{1}{0}{v1}
\Knoten{2}{0}{v2}
\Knoten{3}{0}{v3}
\Knoten{4}{0}{v4}
\Knoten{5}{0}{v5}
\Knoten{6}{0}{v6}
\Knoten{7}{0}{v7}
\Knoten{8}{0}{v8}
\Knoten{9}{0}{v9}
\Knoten{10}{0}{v10}
\Knoten{11}{0}{v11}
\Knoten{4.5}{0}{v12}
\Knoten{10.5}{0}{v13}
\Knoten{3.5}{0.6}{v15}
\Knoten{4.5}{0.8}{v16}
\Knoten{5.5}{0.6}{v17}
\Knoten{9}{0.5}{v18}
\Knoten{12}{0}{t}\nput{0}{t}{$t_i$}
\ncline[linewidth=0.07]{-}{s}{v1}
\ncline{-}{v2}{v3}
\ncline[linewidth=0.07]{-}{v6}{v7}
\ncline{-}{v7}{v8}
\ncline{-}{v3}{v6}
\ncline{-}{v8}{v10}
\ncline{-}{v10}{t}
\ncline{v9}{v10}
\ncline{v1}{v2}
\ncline{v11}{t}
\Kante{v5}{v6}{r}
\Kante{v7}{v8}{s}
\Kante{v11}{t}{h}
\ncarc[arcangle=85,linewidth=0.07]{-}{v1}{v3}
\ncarc[arcangle=10,linewidth=0.07]{-}{v3}{v15}
\ncarc[arcangle=10]{-}{v15}{v16}
\ncarc[arcangle=10]{-}{v16}{v17}
\ncarc[arcangle=10,linewidth=0.07]{-}{v17}{v6}
\ncarc[arcangle=80,linewidth=0.07]{-}{v15}{v17}
\ncarc[arcangle=85]{-}{v4}{v5}
\ncarc[arcangle=100,linewidth=0.07]{-}{v7}{t}
\ncarc[arcangle=80]{-}{v8}{t}
\ncarc[arcangle=25]{-}{v8}{v18}
\ncarc[arcangle=25]{-}{v18}{v10}
\ncarc[arcangle=85]{-}{v10}{v11}
\end{pspicture}
\caption{Illustration for the second phase of deviation.}\label{2ndstep}
\end{figure}
The cost shares are again adapted, that means for each player $i$ and each edge $e \in P_i''$:
\[
\xi_{i,e}(P'')=\begin{cases} \xi_{i,e}, & \text{for } e \in P_i'' \cap P_i, \\
c_e, & \text{for } e \in P_i'' \setminus P_i.
\end{cases}
\]
It is clear that the private costs of the players again remain unchanged and therefore, if all edges are now completely paid, the cost of $P''$ is smaller than the cost of $P$ and $P''$ is enforceable. 

We now show that the tight alternatives used in the second phase of deviation exist. Assume, by contradiction, that there is a player $j$, an edge $f \in P_j'$ which is not completely paid according to $P'$, and player $j$ has no tight alternative for $f$.
Now recall that, whenever an edge $f$ is not completely paid in $P'$, all users $i \in N_f(P')$ already used $f$ in $P$ and therefore $\xi_{i,f}(P')=\xi_{i,f}$ holds for all $i \in N_f(P')\subseteq N_f(P)$. 
Furthermore $f$ was completely paid according to the cost shares of $P$ since we substituted all unpaid edges in the first phase of deviation from $P$ to $P'$. We get 
\[
\sum_{i \in N_f(P')}{\xi_{i,f}}+\sum_{i \in N_f(P)\setminus N_f(P')}{\xi_{i,f}}=c_f>\sum_{i \in N_f(P')}{\xi_{i,f}(P')}=\sum_{i \in N_f(P')}{\xi_{i,f}},
\]
thus there has to be at least one player $k$ which used $f$ in $P_k$, but not in $P_k'$, and with $\xi_{k,f}>0$. Let $A_{k,g}$ be the smallest tight alternative that player $k$ used (to substitute the edge $g$ which was not completely paid in $P$), and also substituted $f$. The situation is illustrated in Figure~\ref{fig_altexist}. 
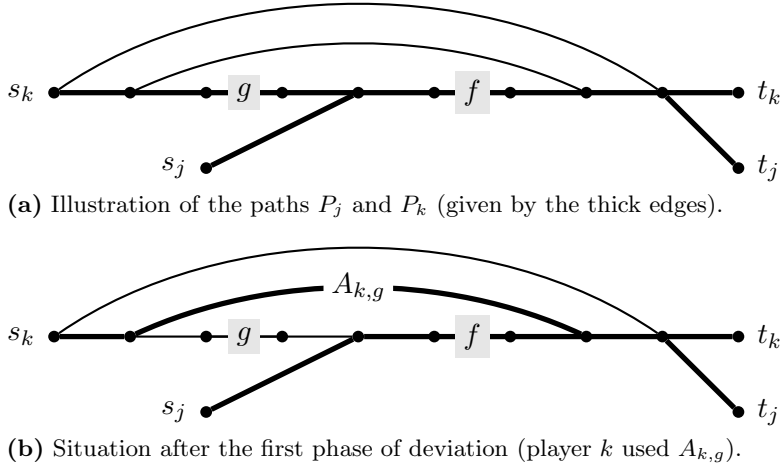
\begin{figure}[ht] \centering \psset{unit=1cm}
 \begin{pspicture}(-0.5,-1.7)(9.5,1.5) 
\uput[0](-0.8,-1.5){\footnotesize\textbf{(a)} Illustration of the paths $P_j$ and $P_k$ (given by the thick edges).}
\Knoten{0}{0}{s}\nput{180}{s}{$s_k$}
\Knoten{1}{0}{v1}
\Knoten{2}{0}{v2}
\Knoten{3}{0}{v3}
\Knoten{4}{0}{v4}
\Knoten{5}{0}{v5}
\Knoten{6}{0}{v6}
\Knoten{7}{0}{v7}
\Knoten{8}{0}{v8}
\Knoten{9}{0}{t}\nput{0}{t}{$t_k$}
\Knoten{9}{-1}{t2}\nput{0}{t2}{$t_j$}
\Knoten{2}{-1}{s2}\nput{180}{s2}{$s_j$}
\ncline[linewidth=0.07]{s}{t}
\Kante{v2}{v3}{g}
\Kante{v5}{v6}{f}
\ncarc[arcangle=35]{-}{s}{v8}
\ncarc[arcangle=25]{-}{v1}{v7}
\ncline[linewidth=0.07]{s2}{v4}
\ncline[linewidth=0.07]{v8}{t2}
\end{pspicture}

 \begin{pspicture}(-0.5,-1.5)(9.5,1.5) 
\uput[0](-0.8,-1.5){\footnotesize\textbf{(b)} Situation after the first phase of deviation (player $k$ used $A_{k,g}$).}
\Knoten{0}{0}{s}\nput{180}{s}{$s_k$}
\Knoten{1}{0}{v1}
\Knoten{2}{0}{v2}
\Knoten{3}{0}{v3}
\Knoten{4}{0}{v4}
\Knoten{5}{0}{v5}
\Knoten{6}{0}{v6}
\Knoten{7}{0}{v7}
\Knoten{8}{0}{v8}
\Knoten{9}{0}{t}\nput{0}{t}{$t_k$}
\Knoten{9}{-1}{t2}\nput{0}{t2}{$t_j$}
\Knoten{2}{-1}{s2}\nput{180}{s2}{$s_j$}
\ncline[linewidth=0.07]{s}{v1}
\ncline{v1}{v4}
\ncline[linewidth=0.07]{v4}{t}
\Kante{v2}{v3}{g}
\Kante{v5}{v6}{f}
\ncarc[arcangle=35]{-}{s}{v8}
\ncarc[arcangle=25,linewidth=0.07]{-}{v1}{v7}\ncput*{$A_{k,g}$}
\ncline[linewidth=0.07]{s2}{v4}
\ncline[linewidth=0.07]{v8}{t2}
\end{pspicture}
\caption{Illustration for the proof that tight alternatives exist.}\label{fig_altexist}
\end{figure}
We now show that the LP($P$)-solution cannot be optimal. Since player $j$ has no tight alternative for $f$, increasing $\xi_{j,f}$ by some suitably small amount, and decreasing $\xi_{k,f}$ by the same amount, yields a feasible LP($P$)-solution. But now player $k$ has no tight alternative for $g$ anymore, since all tight alternatives for $g$ also substitute $f$. Therefore we can increase $\xi_{k,g}$ by some small amount, leading to a feasible LP($P$)-solution with higher objective function value, contradiction.
Thus we showed that the tight alternatives used in the second phase of deviation exist. 

As already mentioned above, if all edges in $P''$ are completely paid, $P''$ is enforceable and cheaper than $P$ and we are finished. Thus we again assume that there is at least one edge which is not completely paid. Analogously as for $P'$ we can show that, for each such edge $f$ and each player $i \in N_f(P'')$, $f \in P_i$ has to hold. 
Furthermore all users of a nonpaid edge have a tight alternative for this edge (the proof that this holds is a little bit more complicated as above, we possibly need to involve three players now): 
Assume that a player $i$ does not have a tight alternative for an edge $f \in P_i$ which is not completely paid according to $P''$, but was completely paid before (i.e. according to $P'$ and also according to $P$). 
Thus there has to be a player $j\in N_f(P)$ with $\xi_{j,f}>0$ who deviated from $f$ by using $A_{j,g}$ in some phase before. If player $j$ did this in the first phase of deviation, the edge $g$ was not completely paid according to $P$ and we can change the LP($P$)-solution as described above to get a contradiction. If the deviation happened in the second phase, the edge $g$ was completely paid according to $P$. Thus there has to be a third player $k$ (but $k=i$ possible) with $\xi_{k,g}>0$ which used some $A_{k,h}$ in the first phase of deviation that also substituted $g$. We are now able to change the cost shares of $P$ to get a contradiction: First, player $i$ increases $\xi_{i,f}$, while player $j$ decreases $\xi_{j,f}$. Now player $j$ increases $\xi_{j,g}$, while player $k$ decreases $\xi_{k,g}$. Finally player $k$ increases $\xi_{k,h}$. By suitably small changes, we get a feasible solution for LP($P$) with higher objective function value than the original optimal cost shares; contradiction. 
Therefore, in a third phase of deviation, all players with nonpaid edges deviate from all those edges by a combination of smallest tight alternatives.

If we proceed in this manner, we finally have to reach a strategy profile for which all edges are completely paid (and thus it is enforceable and cheaper than the profile $P$): In each phase of deviation, at least one edge is substituted by all players which use this edge in $P$. Furthermore, players never return to substituted edges. Therefore, after at most $|P|$ phases of deviation, we reach a strategy profile with the desired property (where $|P|$ denotes the number of edges in the union of the paths $P_1,\ldots,P_n$). 
The existence of the needed tight alternatives in the $k$th phase of deviation can be shown as follows: Assume that $P^{(k)}$ is the current strategy profile, $f$ an edge which is not completely paid according to $P^{(k)}$, and there is a player $i$ which uses an edge $f$, but has no tight alternative for it. 
Then there has to be a player $j$ with $\xi_{j,f}>0$ who deviated from $f$ in some phase $\ell \leq k-1$ by using $A_{j,g}$, where $g$ was not completely paid in the corresponding strategy profile $P^{(\ell)}$. If $\ell=1$ holds, we can decrease $\xi_{j,f}$ and increase $\xi_{i,f},\xi_{j,g}$; contradiction. For $\ell \geq 2$, the edge $g$ was completely paid in $P$ and therefore, there has to be a player $p$ with $\xi_{p,g}>0$ which substituted $g$ in some phase $\leq \ell-1$ by using $A_{p,h}$, and so on. This yields a sequence of players and edges $(i,f), (j,g), (p,h), \ldots, (q,s)$, where the edge $s$  was not completely paid according to $P$. We can now change the cost shares along this sequence (as described above for the third phase of deviation) to get a contradiction. 

Algorithm~\nameref{alg} summarizes the described procedure for computing an enforceable strategy profile $P'$ with cost $C(P')\leq C(P)$ and corresponding cost share functions $\xi$. To complete the proof of the first statement of \autoref{theo_sepa}, it remains to show that $P'$ and $\xi$ can be computed in polynomial time, i.e. Algorithm~\nameref{alg} has polynomial running time. 
As a first step, we show how to compute an optimal solution for LP($P$) in polynomial time. 
To this end we show that, for every player $i$, we do not need to consider all paths $P_i' \in \mathcal P_i$ in (\ref{NE}) of LP($P$), which can be exponentially many paths, but only a set of \textit{alternatives} $\mathcal A_i$ of polynomial cardinality. 
Recall that the graph $G_i$ (induced by $\mathcal P_i$) essentially looks like displayed in Figure~\ref{structureG_i}, and we can w.l.o.g. assume that $P_i$ is given by the thick edges. 

\begin{figure}[ht] \centering \psset{unit=1cm}
 \begin{pspicture}(-0.5,-0.5)(12.5,1.5) 
\Knoten{0}{0}{s}\nput{180}{s}{$s_i$}
\Knoten{1}{0}{v1}
\Knoten{2}{0}{v2}
\Knoten{3}{0}{v3}
\Knoten{4}{0}{v4}
\Knoten{5}{0}{v5}
\Knoten{6}{0}{v6}
\Knoten{7}{0}{v7}
\Knoten{8}{0}{v8}
\Knoten{9}{0}{v9}
\Knoten{10}{0}{v10}
\Knoten{11}{0}{v11}
\Knoten{4.5}{0}{v12}
\Knoten{10.5}{0}{v13}
\Knoten{3.5}{0.6}{v15}
\Knoten{4.5}{0.8}{v16}
\Knoten{5.5}{0.6}{v17}
\Knoten{9}{0.5}{v18}
\Knoten{2}{0.75}{v19}
\Knoten{12}{0}{t}\nput{0}{t}{$t_i$}
\ncline[linewidth=0.07]{-}{s}{t}
\ncline{v9}{v10}
\ncarc[arcangle=50]{-}{v1}{v3}
\ncarc[arcangle=35]{-}{v1}{v19}
\ncarc[arcangle=35]{-}{v19}{v3}
\ncarc[arcangle=10]{-}{v3}{v15}
\ncarc[arcangle=10]{-}{v15}{v16}
\ncarc[arcangle=10]{-}{v16}{v17}
\ncarc[arcangle=10]{-}{v17}{v6}
\ncarc[arcangle=80]{-}{v15}{v17}
\ncarc[arcangle=85]{-}{v4}{v5}
\ncarc[arcangle=100]{-}{v7}{t}
\ncarc[arcangle=80]{-}{v8}{t}
\ncarc[arcangle=25]{-}{v8}{v18}
\ncarc[arcangle=25]{-}{v18}{v10}
\ncarc[arcangle=85]{-}{v10}{v11}
\end{pspicture}
\caption{Structure of $G_i$, where $P_i$ is given by the thick edges.}\label{structureG_i}
\end{figure}
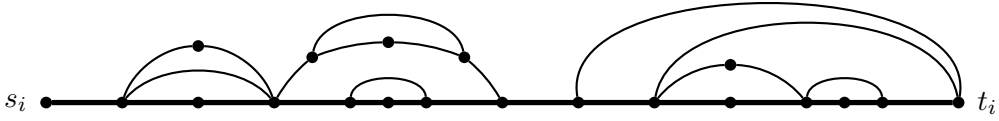

An arbitrary $(s_i,t_i)$-path $P_i'$ consists of subpaths of $P_i$ together with some of the ``arcs''. We call these arcs \textit{alternatives (according to $P_i$)}, and formally, an alternative is a path $A$ which connects two nodes of $P_i$, but is edge-disjoint with $P_i$. The subpath of $P_i$ with the same endnodes as $A$ is denoted by $P_i^A$, and we say that this subpath is \textit{substituted by $A$} (cf. Figure~\ref{fig_alt} for illustration). 
Note that there can be different alternatives which substitute the same subpath of $P_i$ (in Figure~\ref{fig_alt}, this holds for example for the two arcs on the left which both substitute the second and third edge of $P_i$). 
Whenever this is the case, we choose such an alternative with smallest sum of edge costs plus player $i$'s delays, and denote this alternative $A$ as a \textit{cheapest} alternative for $P_i^A$. 
Let $\mathcal A_i$ be the set of all cheapest alternatives according to $P_i$. It is clear that 
\[
\sum_{e \in P_i \setminus P_i'}{\left(\xi_{i,e}+d_{i,e}\right)} \leq \sum_{e \in P_i' \setminus P_i}{\left(c_e+d_{i,e}\right)} \quad \forall P_i' \in \mathcal P_i
\]
holds if and only if
\[
\sum_{e \in P_i^{A}}{\left(\xi_{i,e}+d_{i,e}\right)} \leq \sum_{e \in A}{\left(c_e+d_{i,e}\right)} \quad \forall A \in \mathcal A_i
\]
holds. Since the paths in $\mathcal A_i$ are edge-disjoint, $|\mathcal A_i|$ is bounded by $|E|$. Algorithm~\nameref{algpaths} computes $\mathcal A_i$ in polynomial time. Thus, we can solve LP($P$) in polynomial time. 

\begin{figure}[ht] \centering \psset{unit=1cm}
 \begin{pspicture}(-0.5,-0.5)(12.5,1.5) 
\Knoten{0}{0}{s}\nput{180}{s}{$s_i$}
\Knoten{1}{0}{v1}
\Knoten{2}{0}{v2}
\Knoten{3}{0}{v3}
\Knoten{4}{0}{v4}
\Knoten{5}{0}{v5}
\Knoten{6}{0}{v6}
\Knoten{7}{0}{v7}
\Knoten{8}{0}{v8}
\Knoten{9}{0}{v9}
\Knoten{10}{0}{v10}
\Knoten{11}{0}{v11}
\Knoten{4.5}{0}{v12}
\Knoten{10.5}{0}{v13}
\Knoten{3.5}{0.6}{v15}
\Knoten{4.5}{0.8}{v16}
\Knoten{5.5}{0.6}{v17}
\Knoten{9}{0.5}{v18}
\Knoten{2}{0.75}{v19}
\Knoten{12}{0}{t}\nput{0}{t}{$t_i$}
\ncline{-}{s}{v7}
\ncline[linewidth=0.07]{-}{v7}{t}
\ncline{v9}{v10}
\ncarc[arcangle=50]{-}{v1}{v3}
\ncarc[arcangle=35]{-}{v1}{v19}
\ncarc[arcangle=35]{-}{v19}{v3}
\ncarc[arcangle=10]{-}{v3}{v15}
\ncarc[arcangle=10]{-}{v15}{v16}
\ncarc[arcangle=10]{-}{v16}{v17}
\ncarc[arcangle=10]{-}{v17}{v6}
\ncarc[arcangle=80]{-}{v15}{v17}
\ncarc[arcangle=85]{-}{v4}{v5}
\ncarc[linestyle=dashed, arcangle=100]{-}{v7}{t}\ncput{\colorbox{almostwhite}{$A$}}
\ncarc[arcangle=80]{-}{v8}{t}
\ncarc[arcangle=25]{-}{v8}{v18}
\ncarc[arcangle=25]{-}{v18}{v10}
\ncarc[arcangle=85]{-}{v10}{v11}
\end{pspicture}
\caption{Alternative $A$ (dashed); substituted subpath $P_i^A$ (thick).}\label{fig_alt}
\end{figure}
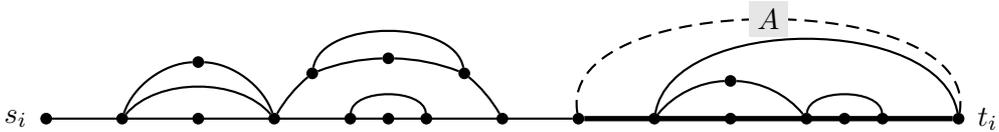

To complete the proof that Algorithm~\nameref{alg} has polynomial running time, it remains to show that the combination of smallest tight alternatives in Line 7 of Algorithm~\nameref{alg} can be found in polynomial time (recall that there will be at most $|P|\leq |E|$ calls of the repeat-loop; and all other steps are obviously polynomial). 
Since $|\mathcal A_i|\leq |E|$ holds for all $i \in N$, we can find, for each edge $f$ which is not completely paid and each user $i$ of $f$, a smallest tight alternative $A_{i,f}$ in polynomial time. If player $i$ uses more than one edge which is not completely paid, a combination of the corresponding smallest tight alternatives can also easily be found; thus step 7 is polynomial.

Overall we showed that the first statement of \autoref{theo_sepa} holds. 
The second statement, i.e. that every optimal strategy profile of $(G,(s_1,t_1), \ldots, (s_n,t_n),c,d)$ is enforceable, follows very easily from the proof of statement (1). Note that, if $P$ is not enforceable, Algorithm~\nameref{alg} computes a strategy profile with strictly smaller cost. Since this would lead to a contradiction if $P$ is an optimal, but not enforceable strategy profile, every optimal strategy profile has to be enforceable. 

We finally want to show the last statement of \autoref{theo_sepa}, i.e. that an optimal Steiner forest can be computed in polynomial time (for $d_{i,e}=0$ for all $i\in N, e \in E$). 
To this end we want to use the result of Bateni et al. \cite{bateni2011} that the Steiner forest problem can be solved in polynomial time on graphs with treewidth at most 2. 
Thus it is sufficient to show that $G$ has treewidth at most 2, or, since generalized series-parallel graphs have treewidth at most 2 (what can easily be seen by induction on the number of operations), that $G$ is generalized series-parallel (note that we can assume w.l.o.g. that $G$ is connected, otherwise we can obviously treat each connected component separately).
Recall that generalized series-parallel graphs are created by a sequence of series, parallel, and/or add operations starting from a single edge, where an add-operation adds a new vertex $w$ and connects it to a given vertex $v$ by the edge $w-v$. 
We show that $G$ can be created like this. 
It is clear that this holds for each $G_i$ since they are series-parallel; but since the $G_i$s are (in general) neither equal nor disjoint, it is not completely obvious that this also holds for their union $G$. 

We now show, starting with the subgraph $G_1$ which is generalized series-parallel, that we can consecutively choose one player and add the vertices and edges of her paths which are not already contained in the subgraph constructed so far by add, series, and parallel operations. Since this again yields a generalized series-parallel graph, we finally  conclude that $G$ is generalized series-parallel. 

Let $G'\neq G$ be the generalized series-parallel subgraph constructed so far. Choose a player $i$ so that $G_i$ is not node-disjoint with $G'$ (exists since $G$ is connected).
Let $P_i$ be an $(s_i,t_i)$-path which is not node-disjoint with $G'$ and subdivide $P_i$ into the following three subpaths $P_i^1, P_i^2, P_i^3$ (where some of the subpaths may consist of only one node): $P_i^1$ starts in $s_i$ and ends in the first node $u$ which is contained in $G'$; $P_i^2$ starts in $u$ and ends in the last node $v$ which is in $G''$, and $P_i^3$ starts in $v$ and ends in $t_i$. Note that $G_i$ consists of $P_i$ together with all alternatives of player $i$ (according to $P_i$).
The following points show that $G_i \setminus G'$ can be added to $G'$ by series, parallel and add operations:
\begin{enumerate}
		\item $P_i^1$ ($P_i^3$) can obviously be added by an add operation at $u$ ($v$) and series operations.
		\item $P_i^2$ is completely contained in $G''$. Thus $P_i^2$ does not need to be added.
		\item Any alternatives where both endnodes are in $P_i^1$ or $P_i^3$ are internal node-disjoint with $G''$ and can therefore be added by parallel and series operations during the addition of $P_i^1$ and $P_i^3$. 
		\item Alternatives with both endnodes in $P_i^2$ are already contained in $G''$.
		\item There are no alternatives with endnodes in different subpaths.
\end{enumerate}
Note that 2.-5. holds since otherwise there would be a new $(s_j,t_j)$-path for a player $j$ already added; contradiction.

This completes the proof of statement (3). Hence, \autoref{theo_sepa} is shown.
\end{proof}

\begin{remark}
The first two results of~\autoref{theo_sepa} can be generalized to nonnegative, nondecreasing and discrete-concave shareable edge cost functions.
However, we do not know whether or not polynomial running time can be guaranteed.

%
\end{remark}
We now demonstrate that the assumption of $n$-sepa graphs is in some sense well justified.
\begin{theorem}\label{theo_gensepa}
For $n \geq 3$ players, there is a generalized series-parallel graph with fixed edge costs and no player-specific delays, so that the unique optimal Steiner forest is not enforceable. Therefore, a black-box reduction as for $n$-series-parallel graphs is impossible for generalized series-parallel graphs (even without player-specific delays). 
\end{theorem}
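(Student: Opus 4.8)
The plan is to prove the statement by producing one explicit instance. For $n=3$ I would fix a connected graph $G$ with source--terminal pairs $(s_1,t_1),(s_2,t_2),(s_3,t_3)$, fixed edge costs and no delays, and then verify three things: (a) $G$ is generalized series-parallel, by writing down an explicit sequence of series, parallel and add operations that builds it; (b) the instance has a \emph{unique} optimal Steiner forest $T^\ast$, which we arrange to be a tree (or a union of trees), so that the unique minimum-cost strategy profile $P^\ast$ is the one where each player uses her unique path through $T^\ast$; and (c) $P^\ast$ is not enforceable. From (a)--(c) the impossibility of a black-box reduction follows immediately: a reduction as in \autoref{theo_sepa}(1), run on input $P^\ast$, would output an enforceable profile $P'$ with $C(P')\le C(P^\ast)=\mathrm{OPT}$; minimality of $\mathrm{OPT}$ forces $C(P')=\mathrm{OPT}$, so $\bigcup_i P'_i$ is an optimal Steiner forest, which by (b) equals $T^\ast$; its components being trees, the paths in it are unique, hence $P'=P^\ast$, contradicting (c). For $n>3$ one simply adds $n-3$ dummy players on private pendant edges that interact with nothing else.

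For (c) I would invoke the enforceability characterization via LP($P$) and property (BB) stated above: $P^\ast$ is enforceable iff one can choose cost shares $\xi_{i,e}\ge 0$ on the edges of the $P^\ast_i$ that are budget-balanced on every used edge and satisfy all deviation inequalities (NE). Non-enforceability means the optimum of LP($P^\ast$) is strictly below $\sum_e c_e$, so the goal is a Farkas-type certificate: a nonnegative combination of finitely many (NE) inequalities together with the (BB) equalities that yields $\sum_e c_e \le (\text{strictly smaller quantity})$, using only $\xi\ge 0$. The design principle for $G$ is therefore that each shared edge of $T^\ast$ must be flanked by \emph{private} edges (each paid in full by a single player — e.g.\ the last edge to a pendant terminal created by an add operation), and each player must possess a cheap alternative segment that lets her re-route around a shared edge \emph{together with} an adjacent private edge; the corresponding (NE) inequality then reads ``full private cost plus share of the shared edge $\le$ alternative cost'', which pins the player's contribution to that shared edge down, while (BB) still insists the full shared cost be collected from its (two) users, forcing the combination to be infeasible.

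The genuinely delicate point is to make sure that the very alternatives that create this pressure do \emph{not} simultaneously exhibit a Steiner forest cheaper than $T^\ast$ — otherwise $T^\ast$ is not optimal and the whole argument collapses. This is why the alternatives must be cheap only \emph{relative to the single private edge they reuse}, why dropping a shared edge from $T^\ast$ must isolate a Steiner vertex and thereby disconnect the \emph{other} player using it (so no single player's reroute improves $T^\ast$), and why the alternatives must moreover \emph{interfere} with one another, so that the players cannot all re-route at once and jointly make all shared edges dispensable. Reconciling these constraints is what pushes the example just outside the $n$-series-parallel class: at least one player's subgraph $G_i$ must fail to be series-parallel (one finds a $K_4$-minus-an-edge pattern inside $G_i$, formed by two interlocking local alternatives), while $G$ as a whole stays $K_4$-minor-free, i.e.\ generalized series-parallel; and it is this interlocking cyclic layout, realized with three players, that provides the infeasibility.

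The main obstacle I expect is precisely this cost calibration for one concrete instance: one must check (i) every alternative is cheap enough that the chosen (NE) inequalities, combined with (BB), become unsatisfiable; (ii) no alternative and no coordinated combination of alternatives yields a forest of cost $\le C(T^\ast)$, so that $T^\ast$ is optimal; and (iii) $T^\ast$ is the \emph{unique} optimum, which is what the reduction-impossibility deduction needs. Items (ii) and (iii) reduce to a finite case distinction over the handful of candidate subgraphs (each player on her $T^\ast$-path versus on each alternative segment, plus the few spanning-tree variants), item (i) is the short LP computation sketched above, and (a) is the explicit operation sequence certifying generalized-series-parallelity; assembling a single clean instance on which all of these hold simultaneously is the bulk of the work, with the logical skeleton exactly as described.
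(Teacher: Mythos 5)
Your overall strategy is exactly the one the paper follows: exhibit one explicit generalized series-parallel instance with three players, certify generalized-series-parallelity by an explicit operation sequence, verify that the optimal Steiner forest is unique, and refute enforceability by bounding each player's total willingness to pay (via a deviation inequality of the form ``share of shared edges $\le$ cost of a cheap private alternative'') and summing these bounds to something strictly below the cost of the optimum. The paper's instance is a small graph on seven vertices whose unique optimal forest costs $346$, while the players' maximal contributions are bounded by $100$, $170$, and $69$ respectively, totalling $339<346$; your ``pins the player's contribution down'' mechanism is precisely this argument, and your deduction that non-enforceability of the unique optimum rules out any black-box reduction is also how the paper reads the result.

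The genuine gap is that you never produce the witness. The theorem is an existence statement, and its entire mathematical content is the concrete graph together with its edge costs; everything else (the LP characterization, the reduction-impossibility deduction, the extension to $n\ge 4$) is routine once the instance is on the table. You explicitly defer ``assembling a single clean instance'' as ``the bulk of the work,'' and you correctly identify the tension that makes this nontrivial --- the alternatives must be cheap enough to cap the cost shares yet expensive enough not to destroy optimality or uniqueness of $T^\ast$ --- but identifying the tension is not the same as resolving it. Without exhibiting a specific graph and cost vector and carrying out the three finite verifications you list, nothing has been proved. (Two minor remarks: the paper handles $n\ge 4$ by setting $s_i=t_i=v$ for the extra players rather than adding pendant edges, which avoids even introducing new edges; and your structural prediction that some $G_i$ must contain a $K_4$-minus-an-edge pattern is a plausible heuristic but is neither needed for, nor established by, the argument.)
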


\begin{proof}
To prove \autoref{theo_gensepa}, consider Figure~\ref{counterexample_3p_3}. The displayed graph $G$ is generalized series-parallel since it can be created from a $K_2$ by a sequence of series- and parallel-operations (as executed in Figure~\ref{counterexample_3p_2}).
But the unique optimal Steiner forest OPT of $(G,(s_1,t_1),(s_2,t_2), (s_3,t_3),c)$, given by the solid edges, is not enforceable. To see this, note that the cost of OPT is $C(\text{OPT})=346$. Furthermore, we can upper-bound the sum of cost shares that the players will pay for using their paths in OPT by $100+69+170=339<C(\text{OPT})$, thus showing that OPT is not enforceable: Player 1 will pay at most 100, because she can use the edge $s_1-t_1$ with cost 100. Player 3 could use the edge $s_3-t_3$ with cost 69, thus she will pay at most 69. It remains to analyze the cost shares of Player 2. Instead of using the subpath from $s_2$ to $s_1$ of her path  in OPT, Player 2 could use the edge $s_2-s_1$ with cost 84. Furthermore, she could use the edge $t_3-t_2$ with cost 86 instead of her subpath from $t_3$ to $t_2$. Since the mentioned subpaths cover the complete path of Player 2 in OPT, she will pay at most $84+86=170$.
\\
For $n \geq 4$, we obviously get an instance with the properties stated in \autoref{theo_gensepa} by choosing an arbitrary node $v$ of $G$ and setting $s_i=t_i=v$ for all $i \in \{4,\ldots,n\}$. 
\end{proof}
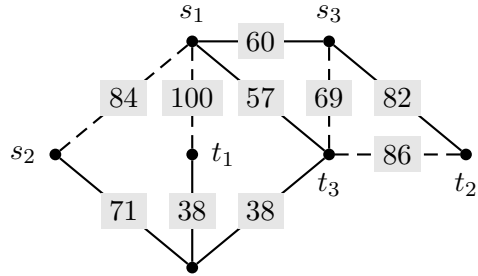
\begin{figure}[ht] \centering \psset{unit=0.6cm}
 \begin{pspicture}(0,0)(10,5) 
\Knoten{3}{5}{s1}\nput{90}{s1}{$s_1$}
\Knoten{3}{2.5}{t1}\nput{0}{t1}{$t_1$}
\Knoten{0}{2.5}{s2}\nput{180}{s2}{$s_2$}
\Knoten{9}{2.5}{t2}\nput{270}{t2}{$t_2$}
\Knoten{6}{5}{s3}\nput{90}{s3}{$s_3$}
\Knoten{6}{2.5}{t3}\nput{270}{t3}{$t_3$}
\Knoten{3}{0}{a}

\Kantedashed{s1}{s2}{84}
\Kantedashed{s1}{t1}{100}
\Kantedashed{t3}{s3}{69}
\Kantedashed{t2}{t3}{86}

\Kante{s1}{s3}{60}
\Kante{s1}{t3}{57}
\Kante{a}{s2}{71}
\Kante{a}{t1}{38}
\Kante{a}{t3}{38}
\Kante{t2}{s3}{82}
\end{pspicture}
\caption{Graph $G$ with three source-terminal pairs $(s_1,t_1),(s_2,t_2),(s_3,t_3)$, fixed edge costs $c$ given on the edges, and no player-specific delays $d$.}\label{counterexample_3p_3}
\end{figure}  

\begin{figure}[H] \centering \psset{unit=0.7cm} 
\begin{pspicture}(0,-0.8)(10,4.8) 
\Knoten{0}{5}{k1}
\Knoten{0}{3}{k2}
\ncline{k1}{k2}
\Knoten{8}{1.5}{s1}
\Knoten{8}{0.5}{t1}
\Knoten{7}{0.5}{s2}
\Knoten{10}{0.5}{t2}
\Knoten{9}{1.5}{s3}
\Knoten{9}{0.5}{t3}
\Knoten{8}{-0.5}{a}
\ncline{s1}{s2}
\ncline{s1}{t1}
\ncline{s1}{s3}
\ncline{s1}{t3}
\ncline{a}{s2}
\ncline{a}{t1}
\ncline{a}{t3}
\ncline{t3}{s3}
\ncline{t2}{s3}
\ncline{t2}{t3}

\Knoten{2}{5}{k3}
\Knoten{2}{3}{k4}
\ncline{k3}{k4}
\ncarc[arcangle=-60]{k3}{k4}
\ncarc[arcangle=60]{k3}{k4}

\Knoten{5}{5}{k5}
\Knoten{5}{3}{k6}
\Knoten{4}{4}{k7}
\Knoten{5}{4}{k8}
\Knoten{6}{4}{k9}
\ncline{k5}{k7}
\ncline{k7}{k6}
\ncline{k5}{k8}
\ncline{k8}{k6}
\ncline{k5}{k9}
\ncline{k9}{k6}

\Knoten{8}{5}{k10}
\Knoten{8}{3}{k11}
\Knoten{7}{4}{k12}
\Knoten{8}{4}{k13}
\Knoten{9}{4}{k14}
\ncline{k10}{k12}
\ncline{k12}{k11}
\ncline{k10}{k13}
\ncline{k13}{k11}
\ncline{k10}{k14}
\ncline{k14}{k11}
\ncarc[arcangle=60]{k10}{k14}

\Knoten{1}{1.5}{k15}
\Knoten{1}{-0.5}{k16}
\Knoten{0}{0.5}{k17}
\Knoten{1}{0.5}{k18}
\Knoten{2}{0.5}{k19}
\Knoten{2}{1.5}{k20}
\ncline{k15}{k17}
\ncline{k17}{k16}
\ncline{k15}{k18}
\ncline{k18}{k16}
\ncline{k15}{k19}
\ncline{k19}{k16}
\ncline{k15}{k20}
\ncline{k19}{k20}

\Knoten{4.5}{1.5}{k21}
\Knoten{4.5}{-0.5}{k22}
\Knoten{3.5}{0.5}{k23}
\Knoten{4.5}{0.5}{k24}
\Knoten{5.5}{0.5}{k25}
\Knoten{5.5}{1.5}{k26}
\ncline{k21}{k23}
\ncline{k23}{k22}
\ncline{k21}{k24}
\ncline{k24}{k22}
\ncline{k21}{k25}
\ncline{k25}{k22}
\ncline{k21}{k26}
\ncline{k25}{k26}
\ncarc[arcangle=60]{k26}{k25}
\end{pspicture}
\caption{Verification that $G$ is generalized series-parallel.}\label{counterexample_3p_2}
\end{figure}
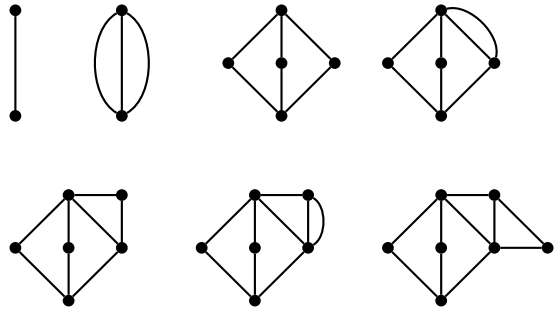




\begin{algorithm}[H]
\caption{\textsc{Irredundant}} \label{irredundant}
\KwData{Undirected graph $G$, source-terminal pairs $(s_1,t_1),\ldots,(s_n,t_n)$}
\KwResult{Maximum irredundant subgraph $G'$}
$C \leftarrow$ set of cut vertices of $G$; \\
\For{each $i \in \{1,\ldots,n\}$}{
$G_i\leftarrow G$; \\
\For{each $c \in C$}{
Remove from $G_i$ all vertices in the components of $G-c$ containing neither $s_i$ nor $t_i$ (if any)
}}
$G'\leftarrow G$; \\
Delete from $G'$ all nodes and edges which are not contained in any $G_i$.
\end{algorithm} 

{ \ }

\begin{algorithm}[H]
\caption{$n$-\textsc{SePa}} \label{alg}
\KwData{Connection Game $(N, G,(s_1,t_1),\ldots,(s_n,t_n),c,d)$ with $n$-series-parallel graph $(G,(s_1,t_1),\ldots,(s_n,t_n))$; strategy profile $P=(P_1,\ldots,P_n)$}
\KwResult{Enforceable strategy profile $P'$ with cost $C(P') \leq C(P)$; cost share functions $\xi$ so that $P'$ is PNE}
Solve LP($P$); let $(\xi_{i,e})_{i \in N, e \in P_i}$ be the computed optimal solution; \\
$P' \leftarrow P$; \\
$\xi_{i,e}(P') \leftarrow \xi_{i,e}$ for all $i \in N$, $e \in P_i'$; \\
\uIf{(\ref{BB}) does not hold}{
Let $S$ be the set of edges which are not completely paid according to $P'$; \\
\Repeat{$S=\emptyset$}{
All players with edges in $S$ deviate from all unpaid edges by using a combination of smallest tight alternatives; \\
Update $P'$ accordingly; \\
Update the cost shares (for each player $i$ and each edge $e \in P_i'$): 
	\[
	\xi_{i,e}(P')=\begin{cases}
	\xi_{i,e}, & \text{for $e \in P_i' \cap P_i$,} \\
	c_e, & \text{for $e \in P_i' \setminus P_i$.}
	\end{cases}
\]
Update $S$ (set of edges which are not completely paid according to $P'$);
}
If there are overpaid edges, decrease the corresponding cost shares arbitrarily until these edges are exactly paid; \\
}
\textbf{output} $P'$ and $\xi$ (induced by $(\xi_{i,e}(P'))_{i \in N, e \in P_i'}$);
\end{algorithm} 

{ \ }

\begin{algorithm}[H]
\caption{\textsc{Alternatives($i$)}} \label{algpaths}
\KwData{Connection Game $(N, G,(s_1,t_1),\ldots,(s_n,t_n),c,d)$ with $n$-series-parallel graph $(G,(s_1,t_1),\ldots,(s_n,t_n))$; player $i \in N$; path $P_i \in \mathcal P_i$}
\KwResult{Set of cheapest alternatives $\mathcal A_i$ according to $P_i$}
Define $\tilde{c}_e:=c_e+d_{i,e}$ for all $e \in G_i$; \\
Delete all edges of $P_i$; \\
\Repeat{all nodes of $P_i$ have been deleted}{
Let $u$ be the first node of $P_i$ which is not yet deleted; \\
\Repeat{$u$ has been deleted}{
Starting in $u$, execute a (partial) BFS in $G_i$, stop if a new node of $P_i$ is reached; \\
\uIf{a new node $v$ of $P_i$ is reached}{
Compute a shortest $u-v$-path $A_i$ (shortest according to $\tilde{c}$); \\
Insert $A_i$ in $\mathcal A_i$; \\
Delete all nodes (different from $u,v$) and edges of $A_i$ from $G_i$; \\
Mark $v$ as \textit{visited};
}
\Else{Delete all nodes ($\notin P_i$) and edges of the connected component of $u$; \\
Delete $u$; \\
Remove all marks.
}
}
}
\end{algorithm} 


\bibliography{master-bib,literature}  
\end{document}